\newtheorem{theorem}{Theorem}[section]
\newtheorem{lemma}[theorem]{Lemma}
\newtheorem{definition}{Definition}[section]
\DeclareMathOperator*{\argmax}{argmax}
\title{Multi-unit Bilateral Trade \thanks{Supported by ERC Advanced
    Grant 321171 (ALGAME) and the EPSRC.}}
\author[1]{Matthias Gerstgrasser}
\author[1]{Paul W. Goldberg}
\author[2]{Bart de Keijzer}
\author[1]{Philip~Lazos}
\author[3]{Alexander Skopalik}
\affil[1]{Department of Computer Science, University of Oxford \protect\\
  \texttt{\{matthias.gerstgrasser, paul.goldberg, filippos.lazos\}@cs.ox.ac.uk}}
\affil[2]{School of Computer Science and Electronic
  Engineering, University of Essex \protect\\
  \texttt{b.dekeijzer@essex.ac.uk}}
\affil[3]{Department of Applied Mathematics, University of Twente
  \protect\\ \texttt{a.skopalik@utwente.nl}}
\begin{document}
\maketitle

\begin{abstract}
  We characterise the set of dominant strategy incentive compatible
  (DSIC), strongly budget balanced (SBB), and ex-post individually
  rational (IR) mechanisms for the multi-unit bilateral trade
  setting. In such a setting there is a single buyer and a single
  seller who holds a finite number $k$ of identical items. The
  mechanism has to decide how many units of the item are transferred
  from the seller to the buyer and how much money is transferred from
  the buyer to the seller. We consider two classes of valuation
  functions for the buyer and seller: Valuations that are increasing
  in the number of units in possession, and the more specific class of
  valuations that are increasing and submodular.

  Furthermore, we present some approximation results about the
  performance of certain such mechanisms, in terms of social welfare:
  For increasing submodular valuation functions, we show the existence
  of a deterministic $2$-approximation mechanism and a randomised
  $e/(1-e)$ approximation mechanism, matching the best known bounds
  for the single-item setting.
\end{abstract}

\section{Introduction}
Auctions form one of the most studied
applications of game theory and mechanism design. In an auction
setting, a single seller or auctioneer runs a pre-determined procedure
or mechanism (i.e., the auction) to sell one or more goods to the
buyers, and the buyers then have to strategise on the way they
interact with the auction mechanism. An auction setting is rather
restrictive in that it involves a single seller that is monopolistic
and is assumed to be non-strategic. While this is a sufficient
assumption in some cases, there are many applications that are more
complex: It is often realistic to assume that a seller expresses a
valuation for the items in her possession and that a seller wants to
maximise her profit. Such settings in which both buyers and sellers
are considered as strategic agents are known as \emph{two-sided
markets}, whereas auction settings are often referred to as
\emph{one-sided markets}.

The present paper falls within the area of mechanism design for
two-sided markets, where the focus is on designing satisfactory market
platforms or intermediation mechanisms that enable trade between
buyers and sellers. In general, the term ``satisfactory'' can be
tailored to the specific market under consideration, but nonetheless,
in economic theory various universal properties have been identified
and agreed on as important. The following three are the most
fundamental ones:
\begin{itemize}
\item \emph{Incentive Compatibility ((DS)IC)}: It must be a dominant
strategy for the agents (buyers and sellers) to behave truthfully,
hence not ``lie'' about their valuations for the items in the
market. This enables the market mechanism to make an informed decision
about the trades to be made.
\item \emph{Individual Rationality (IR)}: It must not harm the utility
of an agent to participate in the mechanism.
\item \emph{Strong Budget Balance (SBB)}: All monetary transfers that
the mechanism executes are among participating agents only. That is,
no money is injected into the market, and no money is burnt or
transferred to any agent outside of the market.
\end{itemize} This paper studies the capabilities of mechanisms that
satisfy these three fundamental properties above for a very simple
special case of a two-sided market.  {\em Bilateral trade} is the most
basic such setting comprising a buyer and a seller, together with a
single item that may be sold, i.e., transferred from the seller to the
buyer against a certain payment from the buyer to the seller. The
bilateral trade setting is a classical one: It was studied in the
seminal paper \cite{myerson1983efficient} and has been studied in
detail in various other publications in the economics
literature. Recent work in the Algorithmic Game Theory literature
\cite{bilateral,bicbilateral,CGKLRT} has focused on the welfare
properties of bilateral trade mechanisms. These works assume the
existence of {\em prior distributions} over the valuations of the
buyer and seller, that may be thought of as modelling an
intermediary's beliefs about the buyer's and seller's values for the
item.

The present paper studies a generalisation of the classical bilateral
trade setting by allowing the seller to hold multiple units
initially. These units are assumed to be of a single resource, so that
both agents only express valuations in terms of how many units they
have in possession. The final utility of an agent (buyer or seller) is
then determined by her valuation and the payment she paid or
received. We focus our study on characterising which mechanisms
satisfy the the above three properties and which of these feasible
mechanisms achieve a good social welfare (i.e., total utility of buyer
and seller combined).

Due to its simplicity, our setting is fundamental to any strategic
setting where items are to be redistributed or reallocated. Our
characterisation efforts show that all feasible mechanisms must belong
to a very restricted class, already for this very simple setting with
one buyer, one seller, and a relatively simple valuation
structure. The specific mechanisms we develop are very simple, and
suitable for implementation with very little communication complexity.

\paragraph{Our Contribution.}  Our first main contribution is a full
characterisation of the class of truthful, individually rational and
strongly budget balanced mechanisms in this setting. We do this
separately for two classes of valuation functions: submodular
valuations and general non-decreasing valuations. Section
\ref{sec:char} presents a high-level argument for the submodular
case. A full and rigorous formal proof for both settings is given in
\autoref{sec:proof-char-gener}. Essentially, for the general case, any
mechanism that aims to be truthful, strongly budget balanced and
individually rational can only allow the agents to trade a single
quantity of items at a predetermined price. The trade then only occurs
if both the seller and buyer agree to it. This leads to a very clean
characterization and has the added benefit of giving a robust, simple
to understand mechanism: the agents do not have to disclose their
entire valuation to the mechanism, and only have to communicate
whether they agree to trade one specific quantity at one specific
price.  For the submodular case, suitable mechanisms can be
characterised as specifying a per-unit price, and repeatedly letting
the buyer and seller trade an item at that price until one of them
declines to continue.

Secondly, we give approximation mechanisms for the social welfare
objective in the Bayesian setting in Section~\ref{sec:approx}, for the
case of submodular valuations. Theorem~\ref{thm:2approx} presents a
2-approximate deterministic mechanism. For randomised mechanisms, we
show a $e/(e-1)$-approximation in Theorem \ref{thm:eapprox}.

\paragraph{Related Literature.} The first approximation result for
bilateral trade was presented in \cite{mcafee}, where for the
single-item case the author proves that the optimal \emph{gain from
trade} can be 2-approximated by the \emph{median mechanism}, which is
a mechanism that sets the seller's median valuation as a fixed price
for the item, and trade occurs if and only if $p$ lies in between the
buyer's and seller's valuation and the buyer's valuation exceeds
$p$. The analysis in \cite{mcafee} is done under the assumption that
the seller's median valuation does not exceed the median valuation of
the buyer. The gain from trade is defined as the increase in social
welfare as a result of trading the item. \cite{bilateral} extended the
analysis of this mechanism by showing that it also $2$-approximates
the social welfare without the latter assumption on the medians.

In \cite{bilateral}, the authors furthermore consider the classical
bilateral trade setting (with a single item) and present various
mechanisms for it that approximate the optimal social welfare. Their
best mechanism achieves an approximation factor of $e/(e - 1)$. As in
the present paper, there are prior distributions on the traders'
valuations, and the quantity being approximated is the expectation
over the priors, of the optimal allocation of the item.

The weaker notion of Bayesian incentive compatibility is considered in
\cite{bicbilateral}, where the authors propose a mechanism in which
the seller offers a take-it-or-leave-it price to the buyer. They prove
that this mechanism approximates the harder \emph{gain from trade}
objective within a factor of $1/e$ under a technical albeit often
reasonable \emph{MHR condition} on the buyer's distribution.

The class of DSIC, IR, and SBB mechanisms for bilateral trade was
characterised in \cite{doubleauctions} to be the class of \emph{fixed
price mechanisms}. In the present work, we characterise this set of
mechanisms for the more general multi-unit bilateral trade setting,
thereby extending their result. The gain from trade arising from such
mechanisms was analysed in \cite{CRGKLT}.

Various recent papers analyse more general two-sided markets, where
there are multiple buyers and sellers, who hold possibly complex
valuations over the items in the market. \cite{CGKLRT} analyse a more
general scenario with multiple buyers, sellers, and multiple distinct
items, and use the same feasibility requirements as ours (DSIC, IR,
and SBB).  \cite{SHA18} have considered a similar setting but focus on
\emph{gains from trade (GFT)} (i.e., the increase in social welfare
resulting from reallocation of the items) instead of welfare. They
initially considered a multi-unit setting like ours (albeit with
multiple buyers and sellers), and they extend their work in
\cite{SHA18b} to allow multiple types of goods. They present a
mechanism that approximates the optimal GFT asymptotically in large
markets. \cite{dynamic} designs two-sided market mechanisms for one
seller and multiple buyers with a temporal component, where valuations
are correlated between buyers but independent across time steps. A
good approximation (of factor $1/2$) of the social welfare using the
more permissive notion of \emph{Bayesian} Incentive Compatibility
(BIC) was achieved by \cite{brustle}. Their optimality benchmark is
different from the one we consider as they compare their mechanism to
the best possible BIC, IR, and SBB mechanism. A very recent work,
\cite{babaioff}, proposes mechanisms that achieve social welfare
guarantees for both optimality benchmarks. \cite{feldman1} considers
optimizing the gains from trade in a two-sided market setting tailored
to online advertising platforms, and the authors extend this idea
further in \cite{feldman2} by considering two-sided markets in an
online setting.

The literature discussed so far aims to maximise welfare under some
budget-balance constraints. An alternative natural goal is to maximise
the intermediary's profit. This has been studied extensively starting
with a paper by Myerson and Satterthwaite \cite{myerson1983efficient},
which gives an analogue of Myerson's seminal result on optimal
auctions, for the independent priors case. Approximately optimal
mechanisms for that settings have further been
studied. \cite{deng2014revenue,niazadeh2014simple} The
correlated-priors case has been investigated from a computational
complexity perspective by \cite{gerstgrasser2016revenue}, as well as
links back to auction theory \cite{gerstgrasser2018complexity}. Two
adversarial, online variants of market intermediation were studied in
\cite{giannakopoulos2017online,koutsoupias2018online}.

\section{Preliminaries}\label{sec:prelims}

In a \emph{multi-unit bilateral trade} instance there is a buyer and
seller, where the seller holds a number of units of an item. This
number will be denoted by $k$. The buyer and seller each have a
\emph{valuation function} representing how much they value having any
number of units in possession. These valuation functions are denoted
by $v$ and $w$, respectively. Precisely stated, a valuation function
is a function $v : [k] \cup \{0\} \rightarrow \mathbb{R}_{\geq 0}$
where $v(0) = 0$. Note that we use the standard notation $[a]$, for a
natural number $a$, to denote the set $\{1, \ldots, a\}$. We denote by
$v$ the valuation function of the buyer, drawn from $f$, and we denote
by $w$ the valuation function of the seller, drawn from $g$. For $q
\in [k]$, the valuation $v(q)$ or $w(q)$ of an agent (i.e., buyer or
seller) expresses in the form of a number the extent to which he would
like to have $q$ units in his possession.

A mechanism $\mathbb{M}$ interacts with the buyer and the seller and
decides, based on this interaction, on an \emph{outcome}. An outcome
is defined as a quadruple $(q_B,q_S,\rho_B,\rho_S)$, where $q_B$ and
$q_S$ denote the numbers of items allocated to the buyer and the
seller respectively, such that $q_B+q_S = k$. Moreover, $\rho_B$ and
$\rho_S$ denote the payments that the mechanism charges to the buyer
and seller respectively. Note that typically the payment of the seller
is negative since he will get money in return for losing some items,
while the payment of the buyer is positive since he will pay money in
return for obtaining some items. Let $\mathcal{O}$ be the set of all
outcomes. For brevity we will often refer to an outcome simply by the
number of units traded $q_B$.

Formally, a mechanism is a function $\mathbb{M} :\Sigma_B \times
\Sigma_S \rightarrow \mathcal{O}$, where $\Sigma_B$ and $\Sigma_S$
denote strategy sets for the buyer and seller. A \emph{direct
revelation} mechanism is a mechanism for which $\Sigma_B$ and
$\Sigma_S$ consists of the class of valuation functions that we want
to consider. That is, in such mechanisms, the buyer and seller
directly report their valuation function to the mechanism, and the
mechanism decides an outcome based on these reports. We want to define
our mechanism in such a way that there is a dominant strategy for the
buyer and seller, under the assumption that their valuation functions
are in a given class $\mathcal{V}$. It is well known (see
e.g. \cite{boergers}) that then we may restrict our attention to
direct revelation mechanisms in which the dominant strategy for the
buyer and seller is to report the valuation functions that they
hold. Such mechanisms are called \emph{dominant strategy incentive
compatible (DSIC)} for $\mathcal{V}$. We will assume from now on that
$\mathbb{M}$ is a direct revelation mechanism. In this paper, we
consider for $\mathcal{V}$ two natural classes of valuation functions:
\begin{itemize}
\item Monotonically increasing submodular functions, i.e., valuation
functions $v$ such that for all $x,y \in [k]$ where $x < y$ it holds
that $v(x) - v(x-1) \geq v(y) - v(y-1)$ and $v(x) < v(y)$. This
reflects a common phenomenon observed in many economic settings
involving identical goods: Possessing more of a good is never
undesirable, but the increase in valuation still goes down as the held
amount increases. For a monotonically increasing submodular function
$v$ and number of units $x \in [k]$, we denote by $\tilde{v}(x)$ the
\emph{marginal} valuation $v(x)-v(x-1)$. Thus, it holds that
$\tilde{v}(x) \geq \tilde{v}(y)$ when $x < y$.  
\item Monotonically increasing functions, i.e., valuation functions
$v$ such that $v(x) < v(y)$ for all $x < y$, where $x,y \in [k]$. 
\end{itemize}

Besides the DSIC requirement, there are various additional properties
that we would like our mechanism to satisfy.
\begin{itemize}
\item Ideally, our mechanism should be \emph{strongly budget balanced
(SBB)}, which means that for any outcome $(q_B,q_S,\rho_B,\rho_S)$
that the mechanism may output it holds that $\rho_B = -\rho_S$. This
requirement essentially states that all money transferred is between
the buyer and the seller only.
\item Additionally, we want that running the mechanism never harms the
buyer and the seller. This requirement is known as \emph{(ex-post)
individual rationality (IR)}. Note that when $v$ and $w$ are the
valuation functions of the buyer and the seller, then the initial
utility of the buyer is $0$ and the initial utility of the seller is
$w(k)$. Thus, a mechanism $\mathbb{M}$ is individually rational if for
the outcome $\mathbb{M}(v,w) = (q_B,q_S,\rho_B,\rho_S)$ it always
holds that $v(q_B) - \rho_B \geq 0$ and $w(q_S) - \rho_S \geq w(k)$.
\item We would like the mechanism to return an outcome for which the
total utility is high. That is, we want the mechanism to maximise the
sum of the buyer's and seller's utility, which is equivalent to
maximizing the sum of valuations $v + w$ when strong budget balance
holds.
\end{itemize}

We characterise in Section \ref{sec:char} the class of DSIC, SBB, IR
mechanisms for both valuation classes. In Section \ref{sec:approx}, we
subsequently provide various approximation results on the quality of
the solution output by some of these mechanisms. For these results, we
assume the standard \emph{Bayesian setting}: The mechanism has no
knowledge of the buyer's and seller's precise valuation, but knows
that these valuations are drawn from known probability distributions
over valuation functions. Our approximation results provide mechanisms
that guarantee a certain outcome quality (which is measured in terms
of \emph{social welfare}, defined in Section \ref{sec:approx}) for
arbitrary distributions on the valuation functions.

Formally, in the Bayesian setting, a multi-unit bilateral trade
instance is a pair $(f,g,k)$, where $k \in \mathbb{N}$ is the total
number of units that the seller initially has in his possession, and
$f$ and $g$ are probability distributions over valuation functions of
the buyer and the seller respectively. Note that we do not impose any
further assumptions on these probability distributions.

\section{Characterisation}\label{sec:char}
In \cite{doubleauctions} the authors prove that every DSIC, IR, SBB
mechanism for classical bilateral trade (i.e. the case where $k=1$) is
a \emph{fixed price mechanism}: That is, the mechanism is parametrised
by a price $p \in \mathbb{R}_{\geq 0}$ such that the buyer and seller
trade if and only if the buyer's valuation exceeds the price and the
price exceeds the seller's valuation. Moreover, in case trade happens,
the buyer pays $p$ to the seller. In this paper we characterise the
set of DSIC, IR, and WBB mechanisms for multi-unit bilateral trade,
and we thereby generalise the characterisation of
\cite{doubleauctions}.

\begin{theorem}\label{thm:char_simple} Any mechanism that satisfies DSIC, IR
and SBB must be a \emph{sequential posted price mechanism with a fixed
per-unit price $p$, potentially with bundling}, which we will refer to
as a \emph{multi-unit fixed price mechanism}. Such a mechanism
iteratively proposes a quantity $q$ of units to both the buyer and
seller simultaneously, which the seller and buyer can choose to either
accept or reject. If both agents accept, $q$ additional units are
reallocated from the seller to the buyer, the buyer pays $pq$ to the
seller, and the mechanism may then either proceed to the next
iteration or terminate. If one of the two agents rejects, the
mechanism terminates. Quantity $q$ may vary among iterations, but must
be pre-determined prior to execution of the mechanism.

For increasing submodular valuations, any number of iterations is
allowed. For general increasing valuations, the mechanism is further
restricted to execute only one iterations (or equivalently, it may
only offer one bundle for a fixed price).
\end{theorem}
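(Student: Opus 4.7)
The plan is to apply the taxation principle on both sides, derive a chain structure on the outcome set, and then case-split on the valuation class.

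Let $\mathcal{M} = \{\mathbb{M}(v, w) : v, w\}$ denote the set of all outcomes the mechanism ever produces, each identified by SBB with a pair $(Q, P)$ where $Q$ units transfer from seller to buyer and $P$ is paid from buyer to seller; by IR, $(0, 0) \in \mathcal{M}$. I would first show $\mathcal{M}$ is a strict chain $(0, 0) \prec (Q_1, P_1) \prec \cdots \prec (Q_m, P_m)$: if two outcomes $(Q, P), (Q', P')$ of $\mathcal{M}$ satisfied $Q < Q'$ and $P \geq P'$, arising at profiles $(v_1, w_1)$ and $(v_2, w_2)$, then examining the cross profile $(v_1, w_2)$ would force one agent to either accept a globally dominated outcome or strictly profit by misreporting, both contradicting DSIC. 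A further application of the taxation principle then shows that $\mathbb{M}(v, w)$ is determined by greedy iterative acceptance on the chain: at each step $i$, both agents decide (based on their reported valuation) whether to advance from $(Q_{i-1}, P_{i-1})$ to $(Q_i, P_i)$, and the mechanism halts at the first rejection.

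For the submodular case, it remains to show the per-unit prices $\pi_i = (P_i - P_{i-1})/(Q_i - Q_{i-1})$ are all equal. Suppose $\pi_i > \pi_{i+1}$ for some $i$, and let $\bar\pi = (P_{i+1} - P_{i-1})/(Q_{i+1} - Q_{i-1})$; since $\bar\pi$ is a convex combination of $\pi_i$ and $\pi_{i+1}$, we have $\bar\pi < \pi_i$, so we may choose $\alpha \in (\bar\pi, \pi_i)$. Construct a submodular $v$ whose marginals are large on $[1, Q_{i-1}]$, equal to $\alpha$ on $(Q_{i-1}, Q_{i+1}]$, and zero beyond; truthfully $v$ stops at step $i-1$. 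A submodular misreport $v'$ with constant marginal $\beta > \max_j \pi_j$ on $[1, Q_{i+1}]$ and zero thereafter causes the mechanism, paired with a seller $w$ willing to advance past step $i+1$, to halt at step $i+1$, giving $v$'s true utility a bonus of $(Q_{i+1} - Q_{i-1})(\alpha - \bar\pi) > 0$ over the truthful outcome, a contradiction to buyer DSIC. A symmetric construction with a submodular seller rules out $\pi_i < \pi_{i+1}$. Hence all $\pi_i$ equal a common value $p$.

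For the general case, I would show $m \leq 1$ by a simpler construction: if $m \geq 2$, take a buyer with $v(Q_1) < P_1$ but $v(Q_2) \gg P_2$, which is unconstrained for general increasing valuations. Truthfully she rejects step 1 for utility $0$, but the misreport $v'$ with $v'(Q_1) > P_1$ and $v'(Q_2) - v'(Q_1) > P_2 - P_1$ advances the mechanism through both steps (against a sufficiently eager seller) for true utility $v(Q_2) - P_2 > 0$, violating DSIC. The main obstacles I anticipate are the chain-structure step, whose cross-profile DSIC argument requires careful bookkeeping of the interaction between both agents' achievable sets, and the submodular price-constancy step, where the constructed valuations must simultaneously be submodular, yield the desired truthful stopping behaviour, and admit a profitable submodular misreport; the parameter choice $\alpha \in (\bar\pi, \pi_i)$ is the key technical maneuver.
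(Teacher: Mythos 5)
Your high-level plan---start from the taxation principle, establish that the outcome set is a chain, then case-split---is reasonable, and the paper's proof indeed passes through analogous structural facts (a unique price per quantity, monotonicity of price in quantity, menu structure). However, there is a genuine logical circularity in your Steps 2 and 3 that, as written, breaks the argument.

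Your Step 2 asserts that DSIC forces the mechanism to run as ``greedy iterative acceptance'' on the chain: halt at the first step at which some agent's \emph{local} marginal utility is negative. This is only equivalent to what the taxation principle actually guarantees---that the output is each agent's \emph{global} favourite in its menu---when the price function along the chain is linear, so that for a submodular agent the utility over chain points is concave and the first local decline is the global max. But linearity is exactly what you prove later in Step 3. With a non-linear $\pi_i$, the buyer's utility $v(Q_j) - P_j$ is not concave in $j$; your own Step 3 construction exhibits this: for your $v$ the utility dips at $Q_i$ but is strictly higher at $Q_{i+1}$ than at $Q_{i-1}$, so the buyer's favourite is $Q_{i+1}$, \emph{not} $Q_{i-1}$. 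Consequently the key claim ``truthfully $v$ stops at step $i-1$'' is unsupported: a DSIC mechanism facing an eager seller must already hand this buyer $Q_{i+1}$, and no profitable deviation is exhibited. The same issue afflicts your general-case argument, where ``truthfully she rejects step 1'' is again asserted, not derived. The paper avoids this trap by proving linearity of the price function \emph{before} anything about the sequential structure (Lemma \ref{lem:linearprice}, via explicit seller-valuation constructions that pin down what menus $N_w$ can look like and then produce a buyer deviation), and only then characterising the menus (Lemmas \ref{lem:buyermenu}--\ref{lem:menustructure}). Separately, your chain-structure step leans on a cross-profile argument ``$(v_1,w_2)$ forces one agent to accept a globally dominated outcome or misreport''; this does not close on its own---adding the two DSIC inequalities at $(v_1,w_2)$ gives only a welfare inequality, not a contradiction---and the paper instead constructs explicit auxiliary valuations to force both offending outcomes into a common menu. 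You flagged both the chain step and the price-constancy step as the likely obstacles; they are indeed the places where the argument as outlined does not go through.
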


In simple terms, our result states that for the submodular valuations
case, the only thing to be done truthfully in this setting is to set a
fixed per-unit price $p$, and ask the buyer and seller if they want to
trade one or several units of the good at per-unit price $p$. This
repeats until one agent rejects. In the general monotone case this is
further restricted to a single such proposed trade. The following is a
brief high-level (informally stated) argument of the proof of Theorem
\ref{thm:char_simple} for the submodular setting. We refer the reader to
\autoref{sec:proof-char-gener} for the complete proof.

\begin{lemma} \label{prop:fixed-price-DSIC} All prices must be fixed
in advance, and cannot depend on the bid / valuation of neither the
seller nor the buyer.
\end{lemma}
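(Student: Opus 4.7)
The plan is to apply the taxation principle on both sides of the market and then combine the resulting menus via a cross-side incentive argument, exploiting the richness of the submodular class.

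First, I would fix an arbitrary seller report $w$ and examine the map $v \mapsto (q(v,w), p(v,w))$. DSIC for the buyer, via the standard taxation-principle argument (two buyer reports yielding the same quantity must yield the same payment, or else the type facing the larger payment can profitably mimic the other), shows that there is a buyer menu $M_w \subseteq \{0, \ldots, k\} \times \mathbb{R}_{\geq 0}$ from which the buyer picks a utility-maximiser. IR forces $(0,0) \in M_w$, and we may discard any option never picked by any buyer type, so that the prices in $M_w$ are strictly increasing in the quantity (otherwise a larger-quantity, not-greater-price option would dominate the smaller one). Symmetrically, fixing $v$ and applying DSIC for the seller yields a seller menu $M^v$ with the analogous structure, where the seller's utility from $(q, p) \in M^v$ is $w(k-q) + p - w(k)$.

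The substantive step is showing that $M_w$ does not in fact depend on $w$ (and, symmetrically, $M^v$ does not depend on $v$). Arguing by contradiction, suppose $M_{w_1} \neq M_{w_2}$; then some quantity $q > 0$ appears at price $p_1$ in $M_{w_1}$ and at a different price $p_2$ in $M_{w_2}$, say $p_1 < p_2$ (the case where $q$ is present in only one menu is handled analogously by adding a suitable dominating option). I would construct a submodular buyer valuation $v^*$ whose uniquely optimal choice from both menus is the option at quantity $q$, by setting marginals $\tilde{v}^*(x)$ very large for $x \leq q$ (defeating any smaller-quantity option in either menu) and very small but positive for $x > q$ (defeating any larger-quantity option in either menu, using that the cleaned-menu prices strictly increase in quantity). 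Then a seller with true valuation $w_1$, facing $v^*$, earns only $p_1$ by honest reporting but can obtain the strictly larger payment $p_2$ for the same quantity $q$ by misreporting as $w_2$, contradicting seller DSIC. A symmetric argument rules out dependence of $M^v$ on $v$, and together these imply a single fixed menu $M$, so the prices therein depend on neither report.

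The main obstacle is the construction of $v^*$: one must ensure it selects the $q$-option in both menus simultaneously, even though the menus may contain entirely different sets of other options. This is where both the flexibility of the submodular valuation class and the prior reduction to menus with strictly monotone prices become essential — without strict monotonicity of prices in the cleaned menus, a larger-quantity option at a lower or equal price could be preferred regardless of how $\tilde{v}^*$ is shaped beyond $q$, foiling the construction. Once the lemma is established, the subsequent steps of the characterisation can read off the single per-unit price and the permissible bundle sizes directly from the fixed menu $M$.
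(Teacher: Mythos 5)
Your proof takes a route that is genuinely different from the paper's own one-paragraph argument for this lemma. The paper applies the taxation principle on each side and then invokes SBB to conclude ``immediately''; it does not exhibit a bridging valuation. Your proof is closer in spirit to the paper's rigorous treatment in Appendix~\ref{sec:proof-char-gener} (Lemma~\ref{lem:uniqueprice}), but dual to it: the appendix bridges two profiles achieving quantity $q$ by constructing a steep-then-flat \emph{seller} valuation $w^{**}$ that yields $q$ against both buyer reports and then invokes \emph{buyer} DSIC, whereas you construct a steep-then-flat \emph{buyer} valuation $v^{*}$ that picks the $q$-option from both $M_{w_1}$ and $M_{w_2}$ and then invoke \emph{seller} DSIC. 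The core construction of $v^{*}$ is sound: with strictly increasing prices in the cleaned menus, very large marginals up to $q$ defeat any smaller-quantity option and tiny positive marginals beyond $q$ defeat any larger-quantity option, so $v^{*}$ selects the $q$-option whenever one is present.

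There is, however, an overclaim that should be fixed. Your stated target ``$M_w$ does not depend on $w$'' is false: the \emph{set of quantities} available in the cleaned menu genuinely does depend on the seller's report. For instance, if the seller reports a valuation that is extremely steep on all of $[k]$, then IR forces $M_w = \{(0,0)\}$, while another seller report can yield a much larger menu. Consequently, your ``case (b)'' --- a quantity $q$ present in one cleaned menu but not in the other --- cannot be ruled out, and the parenthetical ``handled analogously by adding a suitable dominating option'' would not go through: when the seller deviates from $w_1$ to $w_2$ the traded \emph{quantity} changes, and without any control on $w_1$ evaluated at the two different residual quantities, no DSIC contradiction follows. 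Fortunately, this case need not be ruled out at all, because the lemma concerns only \emph{prices}. The correct target is the weaker statement that whenever a quantity $q$ appears in both $M_{w_1}$ and $M_{w_2}$, the attached prices coincide (and symmetrically for $M^{v}$); that is exactly what your case-(a) construction proves, and it suffices, since any two valuation profiles on which $\mathbb{M}$ trades the same quantity $q$ place $q$ in both relevant menus. Replacing the menu-equality target with this price-agreement target makes the proof correct.
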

\begin{proof} This follows immediately from DSIC and SBB: By DSIC, for
any outcome, the price charged to the buyer can't depend on the
buyer's bid, otherwise one can construct scenarios in which the price
charged by the buyer could be manipulated to the buyer's benefit by
misreporting the bid. The same holds for the seller. By SBB the
payment of the buyer completely determines the payment of the seller
(the payment is simply negated) so neither payment can depend on
either's bid.
\end{proof}

\begin{theorem} \label{prop:const-price-DSIC} Suppose in a DSIC, SBB,
IR mechanism the price for the outcome in which $q$ units are traded
is $q p$ for a fixed per-unit price for all potential outcomes. Then
the allocation chosen for a given pair of valuation functions is the
one arising when asking bidders sequentially if they want to trade one
unit (or a bundle of units), until one rejects.
\end{theorem}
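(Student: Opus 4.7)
My plan is to identify the sequential posted-price procedure underlying the mechanism by analyzing the option sets that DSIC induces on each side. Let $O := \{q_B(v,w) : v,w\}$ be the set of all outcomes the mechanism can ever produce, and order it as $0 = q_0 < q_1 < \cdots < q_n$. The inclusion $0 \in O$ follows from IR applied to a buyer whose marginals all lie strictly below $p$: any positive allocation gives negative utility, forcing $q_B(v,w) = 0$. Define $b_i := q_i - q_{i-1}$ as the candidate bundle sizes. For each seller report $w$ let $A_w := \{q_B(v',w) : v'\}$ be the buyer's option set, and symmetrically $B_v$. By DSIC, $q_B(v,w) \in \argmax_{q \in A_w}\bigl(v(q) - qp\bigr)$ and analogously on the seller side. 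Submodularity makes both $U_B(q;v) := v(q) - qp$ and $U_S(q;w) := w(k-q) - w(k) + qp$ concave (hence unimodal) in $q$; their restrictions to the subsequence $O$ remain unimodal in the index. Let $I(v)$ be the largest $i$ with $U_B(q_j;v) \ge U_B(q_{j-1};v)$ for every $j \le i$ (the peak of $U_B(\cdot;v)$ along $O$), and define $J(w)$ analogously for the seller.

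The technical core is the structural claim that $A_w = \{q_0, \ldots, q_{J(w)}\}$ and $B_v = \{q_0, \ldots, q_{I(v)}\}$. For the $\supseteq$ direction, fix $i \le J(w)$ and build a ``peaky'' buyer valuation $v^\dagger$ with marginals $\tilde v^\dagger(t) = p + \varepsilon$ for $t \le q_i$ and $\tilde v^\dagger(t) = \delta$ with $0 < \delta \ll \varepsilon \ll p$ for $t > q_i$; this function is submodular and strictly increasing, with $U_B(\cdot;v^\dagger)$ uniquely maximized at $q_i$. For any seller report $w'$, the buyer's DSIC-chosen outcome in $A_{w'}$ can only be $\le q_i$, so $B_{v^\dagger} \subseteq \{q_0, \ldots, q_i\}$. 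Since $q_i \in O$, some $(v^\dagger, w^{\star\star})$ pair realizes $q_i$ (choose a permissive $w^{\star\star}$ whose $A_{w^{\star\star}}$ contains $q_i$), giving $q_i \in B_{v^\dagger}$. Because $U_S(\cdot; w)$ is non-decreasing on $\{q_0, \ldots, q_{J(w)}\} \supseteq \{q_0, \ldots, q_i\}$, the seller's best response in $B_{v^\dagger}$ is $q_i$; by seller DSIC, $q_B(v^\dagger, w) = q_i$, and therefore $q_i \in A_w$. For the $\subseteq$ direction, suppose $q_j \in A_w$ with $j > J(w)$, realized by some $v^{\star\star}$. Buyer DSIC forces $U_B(q_j; v^{\star\star}) \ge U_B(q_{J(w)}; v^{\star\star})$, and by unimodality this pushes the peak $I(v^{\star\star}) > J(w)$. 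Applying the already-established $\supseteq$ direction on the seller side then gives $q_{J(w)} \in B_{v^{\star\star}}$. But the seller strictly prefers $q_{J(w)}$ to $q_j$ (it is the unique global maximizer of $U_S(\cdot; w)$ on $O$), contradicting seller DSIC.

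Granted this characterization, for every $(v,w)$ the buyer's DSIC-chosen $q_B(v,w)$ is the maximizer of $U_B(\cdot;v)$ over $\{q_0, \ldots, q_{J(w)}\}$, which by unimodality is $q_{\min(I(v),J(w))}$; the identical value is forced by the seller's DSIC over $B_v$, so the two constraints agree. By construction this is exactly the output of the sequential posted-price procedure with pre-determined bundle sizes $b_1, \ldots, b_n$: starting from $q_0 = 0$, at each step $j$ ask both agents simultaneously whether they wish to trade the next bundle $b_j$ at price $b_j p$, continuing on mutual acceptance and terminating at the first rejection. The main obstacle is the initial-segment characterization of $A_w$ and $B_v$; both the lower inclusion (which needs an explicit peaky valuation and an appropriate witness seller type) and the upper inclusion (which intertwines the buyer's and seller's DSIC constraints via unimodality) rely on using both sides of the mechanism together, rather than buyer- or seller-side reasoning alone.
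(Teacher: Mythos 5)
Your approach — characterizing the option sets $A_w$ (what the buyer can induce given the seller's report) and $B_v$ as initial segments of the mechanism's range $O$, and then deriving the outcome as the smaller of the two peaks — is essentially the same as the paper's rigorous proof in the appendix (Lemmas \ref{lem:buyermenu}--\ref{lem:menustructure}, where your $A_w$ and $B_v$ are called $N_w$ and $M_v$). You do this more compactly, and the reduction to the sequential posted-price procedure at the end is clean.

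However, there is a genuine gap in the $\supseteq$ direction of your structural claim, and it is precisely the point the paper goes to considerable trouble to handle with its tie-breaking functions $\tau$. You define $J(w)$ as the \emph{last} index $i$ such that $U_S(q_j;w)\geq U_S(q_{j-1};w)$ for all $j\leq i$, i.e.\ the end of the (possibly flat) peak of $U_S(\cdot;w)$ along $O$, and you assert that $A_w=\{q_0,\dots,q_{J(w)}\}$. Your argument for $q_i\in A_w$ builds the peaky $v^\dagger$, establishes $B_{v^\dagger}\subseteq\{q_0,\dots,q_i\}$ with $q_i\in B_{v^\dagger}$, and then says that since $U_S(\cdot;w)$ is non-decreasing up to $q_{J(w)}$, ``the seller's best response in $B_{v^\dagger}$ is $q_i$; by seller DSIC, $q_B(v^\dagger,w)=q_i$.'' This does not follow: non-decreasing means $q_i$ \emph{is a} maximiser of $U_S(\cdot;w)$ over $B_{v^\dagger}$, not \emph{the} maximiser. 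If $U_S(\cdot;w)$ is flat on, say, $\{q_{i-1},q_i\}$, seller DSIC permits the mechanism to output $q_{i-1}$ on $(v^\dagger,w)$, and then $q_i\notin A_w$. Concretely, if the seller's peak plateau on $O$ is $\{q_3,q_4,q_5\}$ and the mechanism always breaks the seller's ties downward, you get $A_w=\{q_0,\dots,q_3\}$ even though $J(w)=5$. Correspondingly, your concluding identity $q_B(v,w)=q_{\min(I(v),J(w))}$ silently commits to the ``accept when indifferent'' tie-break in the sequential procedure, which is only one legal choice. What your argument actually shows is only the sandwich $\{q_0,\dots,q_{j_0(w)}\}\subseteq A_w\subseteq\{q_0,\dots,q_{J(w)}\}$, where $j_0(w)$ is the \emph{first} (least) index attaining the peak, with an arbitrary subset of the tied plateau in between — exactly what the paper proves as Lemma \ref{lem:menustructure} and packages into the tie-breaking rule $\tau=(\tau_B,\tau_S,\tau_\cap)$ of Definition \ref{def:char}. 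Either restrict your claim to that weaker sandwich form (and carry a tie-breaking rule through to the sequential procedure), or explicitly assume the no-ties generic case.
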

\begin{proof} 
To see this, consider the seller's utility function
$u_{s}(q) = q\cdot p + w(k - q)$ and the buyer's utility function
$u_{b} (q) = v(q) - q\cdot p$, if $q$ units would be traded at unit
price $p$. Since both valuation functions are concave, it is easy to
see that both utility functions are concave, and each has a single
peak (one or more equal adjacent maxima, and no further local
maxima). Furthermore they both start at 0, and once either of them
becomes negative, it stays negative. Suppose we sequentially ask both
bidders if they want to trade one unit for price $p$, until one
rejects. Then the quantity traded is
$\min(\argmax(u_{s}), \argmax{(u_{b})})$, i.e. the first of the two
peaks. If the mechanism iteratively proposes them bundles
$q_1, q_2, \ldots$, then the same expression on the traded quantity
would apply, but with the utility functions restricted to the domain
$\left\{0, q_1, q_1+q_2, \ldots \right\}$. If we ask them about the
big all-$k$-item bundle, we would choose the bundle outcome iff
$u(k) > u(0)$, for both, and $0$ if for either of them $u(0) > u(k)$,
i.e. if one (the first) of the peaks of the two utility functions
restricted to $\left \{ 0,k \right\}$ is at $0$.
	
Now, DSIC means that for any bid of the opposing agent, the agent
cannot get anything better than what she gets by telling the truth. If
the quantity traded by the mechanism would be larger than
$\min(\argmax(u_s), \argmax{(u_b)})$, then the bidder with the lowest
peak could improve her utility by claiming that all outcomes higher
than her peak are wholly unacceptable (utility less than 0) to them;
by IR, the mechanism would then be forced to trade the quantity at the
first peak. If, on the other hand, the traded quantity would be less
than the quantity of the first peak, then both players would gain by
lying, in order to make the mechanism choose to trade a higher
quantity (if such a quantity is at all present in the mechanism's set
of tradeable quantities.)
\end{proof}

\begin{theorem} \label{prop:var-price-DSIC} In a DSIC, SBB, IR
mechanism, all potential outcomes, i.e., (quantity,price)-pairs, must
have the same per-unit price.
\end{theorem}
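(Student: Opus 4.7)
The plan is to proceed by contradiction. Suppose the mechanism has two outcomes $O_1 = (q_1, p_1)$ and $O_2 = (q_2, p_2)$ in its range with distinct per-unit prices $\pi_1 := p_1/q_1$ and $\pi_2 := p_2/q_2$. The sub-case $q_1 = q_2$ is immediate, since Lemma~\ref{prop:fixed-price-DSIC} fixes one price per quantity, so two outcomes with identical quantity but distinct prices is already a contradiction. Hence assume $q_1 \neq q_2$, WLOG $q_1 < q_2$. I treat $\pi_1 < \pi_2$ in detail; the case $\pi_1 > \pi_2$ is analogous, with a concave submodular valuation replacing the linear one on one side to create the requisite conflict of preferences. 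Set $c^\star := (p_2 - p_1)/(q_2 - q_1)$; simple algebra gives $\pi_1 < \pi_2 < c^\star$.

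The heart of the argument is a pair of linear valuations with conflicting preferences. Take a linear buyer valuation $v(q) = cq$ with $c \in (\pi_2, c^\star)$: both $O_1$ and $O_2$ are IR for the buyer, and she strictly prefers $O_1$ to $O_2$ since $c < c^\star$. Take a linear seller valuation $w(q) = dq$ with $d \in (0, \pi_1)$: both outcomes are IR for the seller, and he strictly prefers $O_2$ to $O_1$ since $d < c^\star$. I will show that no choice of $\mathbb{M}(v, w)$ can simultaneously be the buyer-optimal response from the menu $M_B(w) := \{\mathbb{M}(v', w) : v'\}$ and the seller-optimal response from $M_S(v) := \{\mathbb{M}(v, w') : w'\}$.

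To force $O_2 \in M_S(v)$ I introduce an auxiliary linear seller $w''$ with $d'' \in (\pi_1, \pi_2)$, for whom $O_1$ is not IR but $O_2$ is. IR constrains $\mathbb{M}(v, w'')$ to lie in $\{(0, 0), O_2\}$. If it were $(0, 0)$, then seller $w''$ could strictly improve her utility by misreporting as a lower-cost seller; this uses that $O_2$ is in the mechanism's range, so some pair already produces $O_2$, and a standard DSIC chain shows that any seller whose report makes $O_2$ IR must output $O_2$ when faced with a sufficiently steep-linear buyer. This contradicts seller DSIC, so $\mathbb{M}(v, w'') = O_2$ and $O_2 \in M_S(v)$. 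A symmetric argument, using an auxiliary linear buyer $v''$ with $c'' \in (\pi_1, \pi_2)$, yields $\mathbb{M}(v'', w) = O_1$, hence $O_1 \in M_B(w)$.

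The final step applies DSIC at $(v, w)$ directly. Buyer DSIC requires $\mathbb{M}(v, w)$ to be $v$'s utility-maximiser over $M_B(w)$; since $O_1 \in M_B(w)$ and $v$ strictly prefers $O_1$ over $O_2$ and over $(0, 0)$, this forces $\mathbb{M}(v, w) = O_1$. Seller DSIC requires $\mathbb{M}(v, w)$ to be $w$'s utility-maximiser over $M_S(v)$; since $O_2 \in M_S(v)$ and $w$ strictly prefers $O_2$ over $O_1$ and over $(0, 0)$, this forces $\mathbb{M}(v, w) = O_2$. These two conclusions are incompatible, completing the contradiction. The hardest part will be making the two menu-membership sub-claims rigorous: one must carefully chain DSIC and IR constraints through a few auxiliary valuations, using that $O_1$ and $O_2$ lie in the mechanism's range so that suitable witnesses can be transported into the menus $M_B(w)$ and $M_S(v)$.
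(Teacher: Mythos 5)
Your overall architecture---construct a buyer $v$ preferring $O_1$ and a seller $w$ preferring $O_2$, show $O_1 \in M_B(w)$ and $O_2 \in M_S(v)$, then invoke DSIC twice to get incompatible outputs---matches the paper's formal treatment in the appendix (Lemma~\ref{lem:linearprice}), though the paper's in-line argument for this particular theorem uses a different, one-sided route (the seller misreports to make $O_1$ infeasible by IR, leaving $O_2$). So your plan is the right shape, but it has a genuine gap in the closing step and in the menu-membership sub-arguments.

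The gap: you conclude $\mathbb{M}(v,w) = O_1$ because $O_1 \in M_B(w)$ and $v$ prefers $O_1$ over $O_2$ and over $(0,0)$. But DSIC only forces $\mathbb{M}(v,w)$ to maximise $v$'s utility over \emph{all} of $M_B(w)$, and nothing confines $M_B(w)$ to $\{(0,0), O_1, O_2\}$; the mechanism's range may contain other (quantity, price) pairs you have no control over. With your linear $v(q)=cq$, any outcome $(q_3,p_3)$ in the range with per-unit price below $c$ and $q_3$ larger than $q_1$ can strictly dominate $O_1$ for this buyer, so $\mathbb{M}(v,w)=O_1$ is not forced and the contradiction never fires. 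The same issue undermines the intermediate claim that "IR constrains $\mathbb{M}(v,w'')$ to lie in $\{(0,0), O_2\}$" --- IR for $w''$ only excludes $O_1$, not every third outcome --- and it infects the seller side symmetrically. The paper closes this hole by first proving the payment function is non-decreasing in quantity, choosing $(q_1,q_2)$ to be a \emph{minimal} witness of non-linearity with no tradeable quantity strictly between them, and replacing the linear valuations by ones that rise very steeply up to the target quantity and then essentially flatten; such a valuation has a unique utility maximiser among \emph{all} tradeable quantities, which is exactly what you need to turn "$O_1$ is $v$'s best in $M_B(w)$" into a valid inference.
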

\begin{proof} Suppose two outcomes have different per-unit
prices. W.l.o.g. suppose for $q_1 < q_2$, $p_1/q_1 < p_2/q_2$,
i.e. the per-unit price is higher in the larger allocation. Then there
exists a valuation function $v_{s1}$ for the seller in which the
seller prefers outcome $q_2$ over $q_1$, but both give positive
utility; and there exists another valuation function $v_{s2}$ that
gives negative utility for $q_1$, but the same utility for
$q_2$. I.e. $ 0 < u_{s1} (q_1) < u_{s2} (q_2)$ but $u_{s2} (q_1) < 0 <
u_{s2} (q_2) = u_{s1} (q_2)$. Now if for a given buyer's valuation,
the chosen outcome given $v_{s1}$ is $q_1$, then the seller would have
an incentive to misreport $v_{s2}$, making outcome $q_1$ unavailable
to the mechanism due to IR, thus making it choose $q_2$. Vice versa,
if per-unit prices are decreasing, the same argument works for the
buyer.
\end{proof}

Together, these three results give a full characterisation of the
class of DSIC, IR, SBB mechanisms in this setting, although in our
full formal proof that we provide in \autoref{sec:proof-char-gener}, we
need to take into account many further technical obstacles and
details. There is, in particular, a \emph{tie-breaking rule} present,
that takes into account what should happen when the buyer or seller
would be indifferent among multiple possible quantities, or when they
would get a utility of 0 given the proposed prices and quantities.

For the case of general monotone valuations, any such mechanism must
be further restricted to offering only a single outcome (other than
no-trades) to the bidders. The complete proof can be found in
\autoref{sec:proof-char-gener}.

\section{Approximation Mechanisms}\label{sec:approx}
In this section we study the design of DSIC, IR, SBB mechanisms that
optimise the social welfare, i.e., the sum of the buyer's and seller's
valuation. From Theorem~\ref{thm:char_simple}, our characterization states
that such a mechanism needs to be a multi-unit fixed price mechanism,
so that the design challenge lies in an appropriate choice of
unit-price $p$ and quantities offered at each iteration of the
mechanism.

We focus on the case of increasing submodular valuations. Obviously,
every item traded can only increase the social welfare. Therefore,
given that the objective is to maximise it, we repeatedly offer a
single item for trade.\footnote{Also, with respect to our tie-breaking
  rule mentioned at the end of the last section: We simply employ the
  tie breaking rule that favours the highest quantity to trade, which
  is the dominant choice when it comes to maximising social welfare.}
The challenge lies thus in determining the right unit price $p$. It is
easy to see that no sensible analysis can be done if absolutely
nothing is known about the valuation functions of the buyer and
seller. Therefore, we assume a \emph{Bayesian setting}, as introduced
in Section \ref{sec:prelims} in order to model that the mechanism
designer has statistical knowledge about the valuations of the two
agents: The buyer's (and seller' valuation is assumed to be unknown to
the mechanism, but is assumed to be drawn from a probability
distribution $f$ (and $g$) which is public knowledge. We show that we
can now determine a unit price that leads to a good social welfare in
expectation.

For a valuation function $v$ of the buyer, we write $\hat{v}$ to
denote the \emph{marginal increase function} of $v$:
$\hat{v}(q) = v(q) - v(q-1)$ for $q \in [k]$. Thus, $\hat{v}$ is a
non-increasing function. Similarly, for a valuation function $w$ of
the seller, we write $\check{w}$ to denote the \emph{marginal decrease
  function} of $w$: $\check{w}(q) = w(k-q+1) - w(k-q)$, for
$q \in [k]$, so that $\check{w}$ is a non-decreasing function. Thus,
for all $q \in [k]$, the increase in social welfare as a result of
trading $q$ items as opposed to $q-1$ items is
$\hat{v}(q) - \check{w}(q)$. Note that therefore if $v$ and $w$ are
increasing submodular valuation functions of the buyer and seller
respectively, then the social welfare is maximised by trading the
maximum number of units $q$ such that $\hat{v}(q) > \check{w}(q)$.  We
measure the quality of a mechanism on a bilateral trade instance
$(f,g,k)$ as the factor by which its expected social welfare is
removed from the expected optimal social welfare $OPT(f,g,k)$ that
would be attained if the buyer and seller would always trade the
maximum profitable amount: 
\begin{align*}
  OPT(f,g,k)
  &=\mathop{\mathbf{E}}_{v \sim f, w \sim g}\left[w(k) + \sum_{q =
  1}^{\max\{q' : \hat{v}(q') >  
  \check{w}(q')\}} (\hat{v}(q) - \check{w}(q))\right] \\
  &=\mathop{\mathbf{E}}_{v \sim f, w \sim g}\left[\sum_{q = 1}^k \check{w}(q) + \sum_{q = 
    1}^{\max\{q' : \hat{v}(q') \geq \check{w}(q')\}} (\hat{v}(q) - \check{w}(q))\right] 
\end{align*}
 For $q \in [k]$ and a seller's valuation function $w$, we
denote by $GFT(v,w,q)$ the value
$\max\{0, \hat{v}(q) - \check{w}(q)\}$ (where ``GFT'' is intended to
stand for ``Gain From Trade''). Note that $GFT(v,w,q)$ is
non-increasing in $q$ and that $OPT(f,g,k)$ can be written as 
\begin{equation*}
OPT(f,g,k) = \sum_{q = 1}^k \mathbf{E}_{w \sim g}[\check{w}(q) + GFT(v,w,q)]. 
\end{equation*}
 Note that a social welfare as high as opt $OPT(f,g,k)$ can
typically not be attained by any DSIC, IR, SBB mechanism. However, it
is still a natural benchmark for measuring the performance of such a
mechanism, and we will see next that there exists such a mechanism
that achieves a social welfare that is guaranteed to approximate
$OPT(f,g,k)$ to within a constant factor. In particular, for a
mechanism $\mathbb{M}$, let $q_{\mathbb{M}}(v,w)$ be the number of
items that $\mathbb{M}$ trades on reported valuation profiles $(v,w)$,
and define 
\begin{align*}
SW(\mathbb{M},(g,f,k)) = \mathbf{E}_{v \sim f, w \sim g}[v(q_{\mathbb{M}}(v,w)) + 
v(k-q_{\mathbb{M}(v,w)})]
\end{align*}
 as the expected social welfare of mechanism
$\mathbb{M}$. We say that $\mathbb{M}$ achieves an
\emph{$\alpha$-approximation to the optimal social welfare}, for
$\alpha > 1$, iff $OPT(g,f,k)/SW(\mathbb{M},(g,f,k)) < \alpha$.

We show next that the multi-unit fixed price mechanism where $p$ is
set such that
$$\sum_{q=1}^k \mathbf{Pr}_{w \sim g}[\check{w}(q) \leq p] = k/2$$
achieves a $2$-approximation to the optimal social welfare.

\begin{theorem}\label{thm:2approx}
  Let $(f,g,k)$ be a multi-unit bilateral trade instance where the
  supports of $f$ and $g$ contain only increasing submodular
  functions. Let $\mathbb{M}$ be the multi-unit bilateral trade
  mechanism where at each step one item is offered for trade at price
  $p = \sum_{q=1}^k \mathbf{Pr}_{w \sim g}[\check{w}(q) \leq p] =
  k/2$, until either agent reject the offer (informally: $p$ is the
  price such that the seller is expected to accept to trade half of
  his units at price $p$). Mechanism $\mathbb{M}$ achieves a
  $2$-approximation to the optimal social welfare.
\end{theorem}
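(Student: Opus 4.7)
The plan is to give a per-item analysis, exploiting independence of $v\sim f$ and $w\sim g$, that reduces the bound $\mathrm{OPT}\leq 2\,SW(\mathbb{M})$ to an inequality dispatched by Chebyshev's sum inequality together with the defining property $\sum_{q=1}^k \beta_q = k/2$ of $p$, where $\beta_q := \mathbf{Pr}_{w\sim g}[\check{w}(q)\leq p]$.

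I introduce the four quantities $a_q := \mathbf{E}[(\hat{v}(q)-p)^+]$, $b_q := \mathbf{E}[(p-\check{w}(q))^+]$, $\alpha_q := \mathbf{Pr}[\hat{v}(q)\geq p]$, and $\beta_q := \mathbf{Pr}[\check{w}(q)\leq p]$; all four are non-increasing in $q$, because submodularity forces $\hat{v}(q)$ to be non-increasing and $\check{w}(q)$ to be non-decreasing pointwise. By the same monotonicity, the mechanism trades item $q$ iff $\hat{v}(q)\geq p$ and $\check{w}(q)\leq p$: once one side's acceptance condition fails at index $q$, it also fails for every larger index, so the iterative procedure terminates exactly there. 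Writing $\hat{v}(q)-\check{w}(q)=(\hat{v}(q)-p)+(p-\check{w}(q))$ and using independence of $v$ and $w$ then gives
$$SW(\mathbb{M}) - \mathbf{E}[w(k)] = \sum_{q=1}^k \mathbf{E}\bigl[(\hat{v}(q)-\check{w}(q))\mathbf{1}[\hat{v}(q)\geq p]\mathbf{1}[\check{w}(q)\leq p]\bigr] = \sum_q (a_q\beta_q + b_q\alpha_q),$$
while the pointwise inequality $(\hat{v}(q)-\check{w}(q))^+ \leq (\hat{v}(q)-p)^+ + (p-\check{w}(q))^+$ (checked by cases) yields $\mathrm{OPT} - \mathbf{E}[w(k)] \leq \sum_q (a_q + b_q)$. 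Hence $\mathrm{OPT}\leq 2\,SW(\mathbb{M})$ is equivalent to
$$\sum_q a_q(1-2\beta_q) + \sum_q b_q(1-2\alpha_q) \leq \mathbf{E}[w(k)].$$

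The crux is to bound the first sum by zero. Since $(a_q)$ is non-increasing while $(1-2\beta_q)$ is non-decreasing in $q$, Chebyshev's sum inequality gives $\sum_q a_q(1-2\beta_q) \leq \tfrac{1}{k}(\sum_q a_q)(\sum_q (1-2\beta_q)) = 0$, where the last equality is exactly the defining property of $p$. For the second sum, the trivial bounds $1-2\alpha_q \leq 1$ and $b_q \leq p\beta_q$ yield $\sum_q b_q(1-2\alpha_q) \leq \sum_q b_q \leq pk/2$. Finally, since $\check{w}(q)\geq p$ on the event $\{\check{w}(q) > p\}$, we have $\mathbf{E}[\check{w}(q)]\geq p(1-\beta_q)$ and summing gives $\mathbf{E}[w(k)] = \sum_q \mathbf{E}[\check{w}(q)]\geq p(k-k/2)=pk/2$, closing the argument.

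I expect the main obstacle to be spotting the Chebyshev-type application: once the per-item decomposition is in place and one recognises that $(a_q)$ and $(\beta_q)$ are both non-increasing (a direct consequence of submodularity), the price-defining identity $\sum_q \beta_q = k/2$ slots in perfectly to annihilate the Chebyshev bound. The remainder is essentially bookkeeping with elementary inequalities.
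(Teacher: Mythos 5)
Your proof is correct, but it takes a genuinely different route from the paper's, and it is in some ways tidier. The paper first reduces to a fixed buyer's valuation $v$, then splits the sum over item indices at $\ell = \max\{q : \hat{v}(q) \geq p\}$, bounding the two ranges $q \leq \ell$ and $q > \ell$ by separate arguments (the first via a Chebyshev-type step applied to $\mathbf{Pr}[E_q]$ and $\mathbf{E}[GFT(v,w,q)\mid E_q]$, the second via the observation that $\check{w}(q)$ conditioned on $\check{w}(q) > p$ dominates $\hat{v}(q)$ when $q > \ell$). You instead keep both distributions in play, use the symmetric split $\hat{v}(q) - \check{w}(q) = (\hat{v}(q) - p) + (p - \check{w}(q))$ to express $SW - \mathbf{E}[w(k)] = \sum_q (a_q\beta_q + b_q\alpha_q)$ via independence, and dominate $\mathrm{OPT} - \mathbf{E}[w(k)]$ by $\sum_q (a_q + b_q)$ via the pointwise inequality on positive parts. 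This reduces the claim to $\sum_q a_q(1-2\beta_q) + \sum_q b_q(1-2\alpha_q) \leq \mathbf{E}[w(k)]$, where the first sum is killed by a clean application of Chebyshev's sum inequality together with $\sum_q\beta_q = k/2$, and the second is crudely bounded by $pk/2 \leq \mathbf{E}[w(k)]$. Your version buys a more global, per-item-symmetric argument with no reduction to fixed $v$ and no case split at $\ell$; the paper's version in turn makes the structural role of $\ell$ (the buyer's acceptance threshold) explicit, which is closer to how the analogous single-item median-mechanism analysis is usually presented. One small point worth making explicit when writing this up: the identity ``item $q$ trades iff $\hat{v}(q)\geq p$ and $\check{w}(q)\leq p$'' relies on the tie-breaking convention that an indifferent agent accepts, which the paper adopts (it favours the higher traded quantity); with the opposite convention the indicators would have strict inequalities and the same bound would still go through with minor adjustments.
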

\begin{proof}
  Let $v$ be an arbitrary buyer's valuation function. We show that the
  mechanism achieves a $2$-approximation if $f$ is the distribution
  having only $v$ in its support, and hence $v$ is the buyer's
  valuation with probability $1$. It suffices to prove the claim under
  this assumption, because the unit-price $p$ depends on distribution
  $g$ only. Hence, if $\mathbb{M}$ achieves the claimed social welfare
  guarantee for every fixed buyer's valuation function, then it also
  achieves this guarantee for every distribution on the buyer's
  valuation. For ease of notation, we will abbreviate
  $SW(\mathbb{M},(f,g,k))$ to simply $SW$ and we let
  $\ell = \max\{q : \hat{v}_k(q) \geq p\}$ be the highest quantity
  that the buyer would like to trade at unit-price $p$.  In the
  remainder of the proof, we will omit the subscript $w \sim g$ from
  the expected value operator.

  We first observe that $SW$ can be written as follows, where we write
  $\mathbf{1}[\cdot]$ to denote the indicator function and $E_q$ for
  the event that $\hat{v}(q) \geq p \geq \check{w}(q)$.
  \begin{align}
    SW &=
         \mathbf{E}\left[\sum_{q = 1}^k (\check{w}(q) +
    \mathbf{1}[E_q]GFT(v,w,q))\right] \notag\\  
    &= \mathbf{E}\left[\sum_{q = 1}^\ell (\check{w}(q) +
      \mathbf{1}[E_q]GFT(v,w,q))\right] + \mathbf{E}\left[\sum_{q =
      \ell+1}^k \check{w}(q)\right] \label{eq:swexp}  
  \end{align}
   We will bound these last two expected values separately
  in terms of $OPT(f,g,k)$, and subsequently we will combine the two
  bounds to obtain the desired approximation factor.

  We start with the quantities up to $\ell$, for which first rewrite
  the expression as follows.
\begin{align*}
  \mathbf{E}\left[\sum_{q = 1}^\ell (\check{w}(q) + \mathbf{1}[E_q]GFT(v,w,q))\right]
  &= \sum_{q = 1}^\ell \mathbf{E}[ \check{w}(q)] +  \sum_{q = 1}^\ell 
    \mathbf{Pr}[E_q]\mathbf{E}[GFT(v,w,q))\ |\ E_q] \\
  &   = \sum_{q = 1}^\ell \mathbf{E}[ \check{w}(q)] +  \sum_{q =
    1}^\ell \mathbf{Pr}[E_q] \mathbf{E}[GFT(v,w,q))\ |\ E_q]. 
\end{align*}
 Now, observe that
$\mathbf{Pr}[E_q] = \mathbf{Pr}[p \geq \check{w}(q)]$ for quantities
$q \leq \ell$. Since
$\sum_{q = 1}^k \mathbf{Pr}[p \geq \check{w}(q)] = k/2$ and
$\mathbf{Pr}[p \geq \check{w}(q)]$ is decreasing in $q$, this implies
that
$\sum_{q = 1}^\ell \mathbf{Pr}[E_q] = \sum_{q = 1}^\ell \mathbf{Pr}[p
\geq \check{w}(q)] \geq \ell/2$. Using additionally the fact that
$\mathbf{E}[GFT(v,w,q))\ |\ E_q]$ is also non-increasing in $q$, we
obtain the following bound.
\begin{align}
  \mathbf{E}\left[\sum_{q = 1}^\ell (\check{w}(q) + 
  \mathbf{1}[E_q]GFT(v,w,q))\right]
  &\geq \sum_{q = 1}^\ell \mathbf{E}[ \check{w}(q)] + \frac{\sum_{q = 1}^\ell 
    \mathbf{Pr}[E_q]}{\ell}\sum_{q=1}^\ell \mathbf{E}[GFT(v,w,q))\ |\
    E_q] \notag\\
  &\geq \sum_{q = 1}^\ell \mathbf{E}[ \check{w}(q)] +  \frac12
    \sum_{q=1}^\ell \mathbf{E}[GFT(v,w,q))\ |\ E_q]\notag\\ 
  &\geq \sum_{q = 1}^\ell \mathbf{E}[ \check{w}(q)] +  \frac12
    \sum_{q=1}^\ell \mathbf{E}[GFT(v,w,q))]\notag\\ 
  &\geq \frac12\sum_{q = 1}^\ell \mathbf{E}[\check{w}(q) +
    GFT(v,w,q)]\label{eq:firstbound}
\end{align}

For the quantities higher than $\ell$, we first observe that
non-increasingness of $\mathbf{Pr}[\check{w}(q) < p]$ in the quantity
$q$ implies that $\mathbf{Pr}[\check{w}(q) > p]$ is non-decreasing in
$q$. Moreover, $\sum_{q = 1}^k \mathbf{Pr}[\check{w}(q) \leq p] = k/2$
means that
$\sum_{q = 1}^k \mathbf{Pr}[\check{w}(q) > p] = \sum_{q = 1}^k
\mathbf{Pr}[\check{w}(q) \leq p]$, hence it holds that
$\sum_{q = \ell+1}^k \mathbf{Pr}[\check{w}(q) > p] \geq \sum_{q = 1}^k
\mathbf{Pr}[\check{w}(q) \leq p]$. Therefore, we derive
\begin{align}
  \mathbf{E}\left[\sum_{q = \ell+1}^k \check{w}(q)\right]
  &= \frac12 \sum_{q = \ell+1}^k \mathbf{E}[\check{w}(q)] +
    \frac12 \sum_{q = \ell+1}^k \mathbf{E}[\check{w}(q)] \notag \\ 
  & \geq \frac12 \sum_{q = \ell+1}^k \mathbf{E}[\check{w}(q)] +
    \frac12 \sum_{q = \ell+1}^k \mathbf{E}[\check{w}(q)\ |\
    \check{w}(q) > p]\mathbf{Pr}[\check{w}(q) > p] \notag \\ 
  & \geq \frac12 \sum_{q = \ell+1}^k \mathbf{E}[\check{w}(q)] +
    \frac12 \sum_{q = \ell+1}^k \hat{v}(q)\mathbf{Pr}[\check{w}(q) >
    p] \notag \\ 
  & \geq \frac12 \sum_{q = \ell+1}^k \mathbf{E}[\check{w}(q)] +
    \frac12 \sum_{q = \ell+1}^k \mathbf{E}[GFT(v,w,q)] \notag \\ 
  & \geq \frac12 \sum_{q = \ell+1}^k \mathbf{E}[\check{w}(q) +
    GFT(v,w,q)] \label{eq:secondbound} ,  
\end{align}
 where the second inequality holds because $\check{w}(q)$
conditioned on $\check{w}(q) > p$ is always higher than $\hat{v}(q)$
which does not exceed $p$. Moreover, the third inequality follows
because
$\mathbf{E}[GFT(v,w,q)] = \mathbf{E}[(\hat{v}(q) -
\check{w}(q))\mathbf{1}(\hat{v}(q) > \check{w}(q))] \leq
\mathbf{E}[\hat{v}(q)\mathbf{1}(\hat{v}(q) > \check{w}(q))] \leq
\mathbf{E}[\hat{v}(q)\mathbf{1}(p > \check{w}(q))] =
\hat{v}(q)\mathbf{Pr}[p > \check{w}(q)]$.

We now use (\ref{eq:firstbound}) and (\ref{eq:secondbound}) to bound
(\ref{eq:swexp}) and obtain the desired inequality
\begin{equation*}
SW \geq \frac12 \sum_{q = 1}^k \mathbf{E}[\check{w}(q) + GFT(v,w,q)] =
\frac{OPT(f,g,k)}{2}, 
\end{equation*}
which proves the claim. 
\end{proof}
The above $2$-approximation mechanism is deterministic. We show next
that we can do better if we allow randomisation: Consider the
\emph{Generalized Random Quantile Mechanism}, or $\mathbb{M}_{G}$,
which draws a number $x$ in the interval $[1/e,1]$ where the CDF is
$\ln(ex)$ for $x \in [1/e,1]$. The mechanism then sets a unit price
$p(x)$ such that
$\mathbf{E}_{w}[\max\{q : w(q) \geq qp(x)\}] = \sum_{q=1}^k
\mathbf{Pr}_{w}[\check{w}(q) \leq p(x)] = xk$, repeatedly offering
single item trades as before. In words, the price is set such that the
expected number of units that the seller is willing to sell, is an $x$
fraction of the total supply, where $x$ is randomly drawn according to
the probability distribution just defined. This randomised mechanism
satisfies DSIC, IR, and SBB, because it is simply a distribution over
multi-unit fixed price mechanisms.  Note that this mechanism is also a
generalisation of a previously proposed mechanism: In
\cite{bilateral}, the authors define the special case of this
mechanism for a single item, and call it the
$\emph{Random Quantile Mechanism}$. They show that it achieves a
$e/(e - 1)$-approximation to the social welfare, and we will prove
next that this generalisation preserves the approximation factor,
although the proof we provide for it is substantially more complicated
and requires various additional technical insights.
\begin{theorem}\label{thm:eapprox}
  Let $(f,g,k)$ be a multi-unit bilateral trade instance where the
  supports of $f$ and $g$ contain only increasing submodular
  functions. The Generalised Random Quantile Mechanism $\mathbb{M}_G$
  achieves a $e/(e-1)$-approximation to the optimal social welfare.
\end{theorem}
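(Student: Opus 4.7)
The plan is to generalise the argument of Theorem~\ref{thm:2approx} by integrating the per-price welfare bound over the randomisation $x$ of $\mathbb{M}_G$, which has density $1/x$ on $[1/e,1]$. As in that proof, the unit price $p(x)$ depends only on $g$ and on the draw $x$, so by linearity I may condition on an arbitrary fixed buyer valuation $v$. For each $x$, let $\ell(x)=\max\{q:\hat v(q)\geq p(x)\}$ and $E_q(x)=\{\hat v(q)\geq p(x)\geq\check w(q)\}$, and let $SW(x)$ denote the expected social welfare conditional on $x$. The total expected welfare of $\mathbb{M}_G$ then equals $\int_{1/e}^{1}SW(x)/x\,dx$, and $SW(x)$ admits the decomposition of \eqref{eq:swexp} into contributions from $q\leq\ell(x)$ and $q>\ell(x)$.

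For the $q\leq\ell(x)$ part I would establish a per-$x$ analog of \eqref{eq:firstbound}. The defining identity $\sum_{q=1}^{k}\mathbf{Pr}[\check w(q)\leq p(x)]=xk$ and the monotonicity of $\mathbf{Pr}[\check w(q)\leq p(x)]$ in $q$ together imply $\sum_{q=1}^{\ell(x)}\mathbf{Pr}[E_q(x)]\geq x\,\ell(x)$. Chebyshev's sum inequality applied to the two similarly-ordered non-increasing sequences $(\mathbf{Pr}[E_q(x)])_q$ and $(\mathbf{E}[GFT(v,w,q)\mid E_q(x)])_q$, followed by the standard step $\mathbf{E}[GFT(v,w,q)\mid E_q(x)]\geq\mathbf{E}[GFT(v,w,q)]$, yields a lower bound of $\sum_{q=1}^{\ell(x)}\mathbf{E}[\check w(q)]+x\sum_{q=1}^{\ell(x)}\mathbf{E}[GFT(v,w,q)]$. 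For the $q>\ell(x)$ part I would mimic the three-step chain of \eqref{eq:secondbound} with the scalar $1/2$ replaced by $(1-x)$: the identity $\sum_q\mathbf{Pr}[\check w(q)>p(x)]=(1-x)k$ and the monotonicity of $\mathbf{Pr}[\check w(q)>p(x)]$ in $q$ give $\sum_{q=\ell(x)+1}^{k}\mathbf{Pr}[\check w(q)>p(x)]\geq(1-x)(k-\ell(x))$, and the combination $\mathbf{E}[\check w(q)\mid\check w(q)>p(x)]\geq p(x)\geq\hat v(q)$ together with $\mathbf{E}[GFT(v,w,q)]\leq\hat v(q)\mathbf{Pr}[\check w(q)\leq p(x)]$ yields the companion bound of the form $(1-x)\sum_{q=\ell(x)+1}^{k}\mathbf{E}[\check w(q)+GFT(v,w,q)]$.

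Finally, I would interchange the integral $\int_{1/e}^{1}dx/x$ with the sums over $q$, and for each $q$ split the range of integration at $x_q:=\sup\{x:p(x)\leq\hat v(q)\}$, so that $q\leq\ell(x)$ holds exactly on $x\in[1/e,x_q]$ and $q>\ell(x)$ on $(x_q,1]$. The logarithmic density is calibrated precisely so that the $x$ factor in the first-region bound cancels against $1/x$ and integrates to $\int_{1/e}^{x_q}dx=x_q-1/e$, while the $(1-x)/x$ integrand over $(x_q,1]$ combined with the $\mathbf{E}[\check w(q)]$ terms and the overall normalisation $\int_{1/e}^{1}dx/x=1$ collapses to give a coefficient of at least $(e-1)/e$ on every $\mathbf{E}[\check w(q)+GFT(v,w,q)]$ summand of $OPT(f,g,k)$. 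The main obstacle I anticipate is the careful bookkeeping around this $x_q$ crossover: because $\ell(x)$ changes as $x$ varies, each per-$q$ coefficient emerging from the double integral is a mixture of logarithmic and linear terms in $x_q$, and one must verify that these mixtures simplify to the target $(e-1)/e$ factor regardless of where $x_q$ sits in $[1/e,1]$, reducing to the exponential-integral identity that powers the single-item random-quantile proof of \cite{bilateral}.
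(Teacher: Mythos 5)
Your plan is a genuinely different route from the paper's. The paper decomposes $OPT_q = \mathbf{E}[\max(\hat v(q),\check w(q))]$ as $\hat v(q) + \mathbf{E}[\max(0,\check w(q)-\hat v(q))]$; the second summand passes through $SW$ at full weight (the mechanism never trades $q$ when $\check w(q)\geq\hat v(q)$), so the $(1-1/e)$ loss is charged entirely against the $\hat v(q)$ term, and the integral over $x$ appears \emph{inside} a per-$q$ estimate of the probability that $p(x)$ lands in $[\check w(q),\hat v(q)]$. By contrast you keep the $\check w + GFT$ decomposition of $OPT$, apply the two-sided template of Theorem~\ref{thm:2approx} per fixed $x$, and then integrate.

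Unfortunately this order of operations creates a genuine gap, and I do not think it can be patched along the lines you describe. Carry out the bookkeeping you defer to the end: for a fixed $q$, your per-$x$ bounds contribute $\mathbf{E}[\check w(q)] + x\,\mathbf{E}[GFT(v,w,q)]$ on $x\in[1/e,x_q]$ and $(1-x)\mathbf{E}[\check w(q)+GFT(v,w,q)]$ on $x\in(x_q,1]$. Integrating against the density $1/x$, the total coefficient on $\mathbf{E}[\check w(q)]$ is
\[
\int_{1/e}^{x_q}\frac{dx}{x} \;+\; \int_{x_q}^{1}\frac{1-x}{x}\,dx
\;=\; \ln(ex_q) \;+\; \bigl(-\ln x_q - (1-x_q)\bigr) \;=\; x_q,
\]
and the coefficient on $\mathbf{E}[GFT(v,w,q)]$ is $2x_q - 1 - 1/e - \ln x_q$. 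Both are equal to $1/e$ at $x_q=1/e$, which is strictly below the target $1-1/e$; so the ``collapse to a coefficient of at least $(e-1)/e$ on every summand'' you assert does not hold for small $x_q$. This is not a tie-breaking or endpoint subtlety but a structural loss: your bound for the $q>\ell(x)$ range deliberately trades an $x$-fraction of the seller's retained value $\mathbf{E}[\check w(q)]$ for a $(1-x)$-fraction of $GFT$, and averaging over $x$ then leaves only an $x_q$-fraction of $\mathbf{E}[\check w(q)]$ even though $SW$ always retains this term in full. (Sanity check: $k=1$, $v\equiv 0$, any seller distribution gives $x_1=1/e$, $GFT\equiv 0$, and $SW=OPT$; your bound certifies only $SW\geq OPT/e$.) A secondary issue: the per-$q$ chain you invoke for $q>\ell(x)$ requires $\Pr[\check w(q)>p(x)]\geq 1-x$, but $\ell(x)$ is determined by the \emph{buyer's} valuation, not by where the seller's acceptance probability crosses $1-x$, so that inequality is not available per-$q$. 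The fix used in the paper is to shift the unit-weight burden onto $\mathbf{E}[\max(0,\check w(q)-\hat v(q))]$ (which $SW$ always collects) rather than onto $\mathbf{E}[\check w(q)]$, and to bound the probability mass on $\hat v(q)$ via the calibration $\sum_q\Pr_w[\check w(q)\leq p(x)]=xk$ together with Chebyshev, with the $x$-integral performed last.
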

\begin{proof}
  As in the proof of Theorem \ref{thm:2approx}, we fix a valuation
  function $v$ for the buyer.  It suffices to prove the claim under
  this assumption, because the unit-price $p$ depends on distribution
  $g$ only. For ease of notation, we will again abbreviate
  $SW(\mathbb{M}_G,(f,g,k))$ to simply $SW$.

  We first rewrite $OPT(f,g,k)$ as follows: 
\begin{align}
OPT(f,g,k) &= \sum_{q=1}^k
             \mathbf{E}_w[\max\{\hat{v}(q),\check{w}(q)\}]\notag\\ 
  &= \sum_{q=1}^k \mathbf{E}_w[\hat{v}(q)]
    + \sum_{q=1}^k \mathbf{E}_w[(\check{w}(q) -
    \hat{v}(q))\mathbf{1}[\check{w}(q) \geq \hat{v}(q)]] \notag \\ 
  & = \sum_{q=1}^k \hat{v}(q) +
    \sum_{q = 1}^k \mathbf{E}_w[\check{w}(q) - \hat{v}(q)\ |\
    \check{w}(q) \geq \hat{v}(q)]
  \cdot \mathbf{Pr}_w[\check{w}(q) \geq \hat{v}(q)] \notag \\
  &= \sum_{q=1}^k \hat{v}(q) + \sum_{q=1}^k
    (\mathbf{E}_w[\check{w}(q)\ |\ \check{w}(q) \geq \hat{v}(q)] -
    \hat{v}(q))\cdot \mathbf{Pr}_w[\check{w}(q) \geq
    \hat{v}(b)] \label{eq:optbound}. 
\end{align}

In the remainder of the proof, we will derive a lower bound of
$(1-1/e)$ times the expression $(\ref{eq:optbound})$ on $SW$, which
implies our claim. We first observe that $SW$ can be bounded and
rewritten as follows.  
\begin{align}
  SW &= \sum_{q=1}^k \mathbf{E}_w[\check{w}(q)\mathbf{1}[\check{w}(q) \geq 
       \hat{v}(q)]]
       + \sum_{q=1}^k \mathbf{Pr}_w[\check{w}(q) < 
       \hat{v}(q)]  \mathbf{E}_{w,x}[\hat{v}(q)\mathbf{1}[p(x) \in 
       [\check{w}(q),\hat{v}(q)]]\notag\\ 
     &\qquad+ \check{w}(q)\mathbf{1}[p(x) \not\in 
       [\check{w}(q),\hat{v}(q)]]\ |\ \check{w}(q) < \hat{v}(q) ] \notag \\
     &\geq \sum_{q=1}^k \mathbf{E}_w[\hat{v}(q)\mathbf{1}[\check{w}(q) \geq 
       \hat{v}(q)] + (\check{w}(q) - \hat{v}(q) )
       \mathbf{1}[\check{w}(q) \geq \hat{v}(q)]]  
       \notag \\
     &\qquad+ \sum_{q=1}^k \mathbf{E}_{w,x}[\hat{v}(q)\mathbf{1}[p(x) \in 
       [\check{w}(q),\hat{v}(q)]]\ |\ \check{w}(q) < \hat{v}(q) ]]\notag
       \cdot
       \mathbf{Pr}_w[\check{w}(q) < \hat{v}(q)] \notag \\
     &= \sum_{q=1}^k \hat{v}(q)\mathbf{Pr}_w[\check{w}(q) \geq
       \hat{v}(q)] \notag\\
     &\qquad+ \sum_{q=1}^k (\mathbf{E}_w[\check{w}(q)\ |\ \check{w}(q) \geq 
       \hat{v}(q)] - \hat{v}(q)) \mathbf{Pr}[\check{w}(q) \geq \hat{v}(q)] \notag \\
     &\qquad+ \sum_{q=1}^k \mathbf{E}_{w,x}[\hat{v}(q)\mathbf{1}[p(x) \in 
       [\check{w}(q),\hat{v}(q)]\ |\ \check{w}(q) < \hat{v}(q) ]]
       \mathbf{Pr}_w[\check{w}(q) <  
       \hat{v}(q)] \notag \\
     & = \sum_{q=1}^k \hat{v}(q)\mathbf{Pr}_w[\check{w}(q) \geq \hat{v}(q)]  \notag \\
     &\qquad + 
       \sum_{q=1}^k \hat{v}(q)\mathbf{Pr}_{w,x}[p(x) \in [\check{w}(q),\hat{v}(q)]\ |\ 
       \check{w}(q) < \hat{v}(q) ]] \cdot
       \mathbf{Pr}_w[\check{w}(q) < \hat{v}(q)] 
       \label{eq:firstpart} \\
     &\qquad+ \sum_{q=1}^k (\mathbf{E}_w[\check{w}(q)\ |\ \check{w}(q) \geq
                          \hat{v}(q)] -
                          \hat{v}(q))\mathbf{Pr}[\check{w}(q) \geq
                          \hat{v}(q)] . \notag 
\end{align}

Next, we bound the first part (\ref{eq:firstpart}) of the last expression, i.e., excluding the last summation. 

\begin{align}
  (5) &\leq \sum_{q=1}^k \hat{v}(q) \mathbf{Pr}_w[\check{w}(q) \geq
        \hat{v}(q)] + \sum_{q=1}^k \hat{v}(q)\mathbf{Pr}_w[\check{w}(q)
        < \hat{v}(q)] \cdot  \frac{\int_{1/e}^{z : p(z) = \hat{v}(q)}
        \mathbf{Pr}_w[\check{w}(q) \leq p(x)] \frac{1}{x}
        dx}{\mathbf{Pr}_w[\check{w}(q) < \hat{v}(q)]} \notag \\ 
      & = \sum_{q=1}^k \hat{v}(q) \mathbf{Pr}_w[\check{w}(q) \geq
        \hat{v}(q)] \notag + \int_{1/e}^{z : p(z) = \hat{v}(q)}
        \left(\sum_{q=1}^k \hat{v}(q)\mathbf{Pr}_w[\check{w}(q) \leq
        p(x)]\right) \frac{1}{x} dx \notag \\ 
      &  \geq \sum_{q=1}^k \hat{v}(q) \mathbf{Pr}_w[\check{w}(q) \geq
        \hat{v}(q)]  +\int_{1/e}^{z : p(z) = \hat{v}(q)} \sum_{q=1}^k
        \hat{v}(q)\frac{kx}{k}\frac{1}{x} dx \notag \\ 
      &  = \sum_{q=1}^k \hat{v}(q) \mathbf{Pr}_w[\check{w}(q) \geq
        \hat{v}(q)] + \sum_{q=1}^k \hat{v}(q) \int_{1/e}^{z : p(z) =
        \hat{v}(q)} 1 dx \notag \\ 
      &  = \sum_{q=1}^k \hat{v}(q) \mathbf{Pr}_w[\check{w}(q) \geq
        \hat{v}(q)] + \sum_{q=1}^k \hat{v}(q)
        (\mathbf{Pr}[\check{w}(q) < \hat{v}(q)] - \frac{1}{e}) \notag \\ 
      &  = (1 - 1/e)\sum_{q=1}^k \hat{v}(q) \label{eq:boundonfirstpart},
\end{align}
where for the inequality we used that both $\hat{v}(q)$ and
$\mathbf{Pr}_{w}[\check{w}(q) < \hat{v}(q)]$ are non-increasing in
$q$, so that replacing all the probabilities by the average
probability $xk/k$ yields a lower value. Substituting
(\ref{eq:firstpart}) by (\ref{eq:boundonfirstpart}) and using the
expression (\ref{eq:optbound}) for $OPT$ then yields the desired
bound.
\begin{align*}
  & SW \geq (1 - 1/e) \Big(\sum_{q=1}^k \hat{v}(q) + \sum_{q=1}^k
    (\mathbf{E}_w[\check{w}(q)\ |\ \check{w}(q) \geq \hat{v}(q)] -
    \notag \\ 
  &\qquad \qquad\qquad\qquad\qquad
    \hat{v}(q))\mathbf{Pr}_w[\check{w}(q) \geq \hat{v}(q)]\Big) \\ 
  & = (1 - 1/e)OPT(f,g,k) .
\end{align*}

\end{proof}

Currently we have no non-trivial lower bound on the best approximation
factor achievable by a DSIC, IR, SBB mechanism, and we believe that
the approximation factor of $e/(e-1)$ achieved by our second mechanism
is not the best possible. For our first mechanism, it is rather easy
to see that the analysis of the approximation factor of our first
mechanism is tight, and that it is a direct extension of the median
mechanism of \cite{mcafee}, for which it was already shown in
\cite{bilateral} that it does not achieve an approximation factor
better than $2$: The authors show that $2$ is the best approximation
factor possible for any deterministic mechanism for which the choice
of $p$ does not depend on the buyer's distribution.

For the more general class of increasing valuation functions, an
approximation factor of $(2e-1) / (e-1) \approx 2.582$ to the optimal
social welfare is achieved by a mechanism of \cite{bilateral}: They
use a $e/(e-1)$-approximation mechanism for the single-item setting,
which yields a $(2e-1)/(e-1)$ approximation mechanism for the
multi-unit setting through a conversion theorem which they prove. We
note that their conversion theorem is more precisely presented for the
setting with a buyer and a seller who holds one \emph{divisible}
item. However, their proof straightforwardly carries over to the
multi-unit setting. It would be an interesting open challenge to
improve this currently best-known bound of $(2e-1)/(e-1)$ for general
increasing valuations.

\bibliographystyle{plain}
\bibliography{mbt}

\appendix

\section{Proof of the Characterisation for General
  Valuations}\label{sec:proof-char-gener} 

We denote the class of monotonically increasing submodular functions with 
domain $[k]$ by $\mathcal{S}_k$. We denote the class of monotonically 
increasing functions with domain $[k]$ by $\mathcal{I}_k$.

The definition below defines the multi-unit fixed price mechanisms as a direct 
revelation mechanism. From the point of view of providing a rigorous proof, this 
is more convenient to work with than the sequential posted price definition given 
in the main part of the paper.

\begin{definition}\label{def:char}
	Let $p \in \mathbb{R}_{\geq 0}$, let $S \subseteq [k]$, and let $\tau =(\tau_B, 
	\tau_S, \tau_{\cap})$ be a vector of three tie-breaking functions specified 
	below. The \emph{multi-unit fixed price mechanism} $\mathbb{M}_{p,S,\tau}$ 
	is the direct revelation mechanism that returns for a multi-unit bilateral trade 
	instance $(f,g,k)$ an outcome $\mathbb{M}_{p,S,\tau}(v,w) = 
	(q_B,q_S,\rho_B,\rho_S)$ on reported valuation functions $v$ and $w$, where
	\begin{itemize}
		\item $\tau_B(v) \subseteq \arg_q \max\{v(q) - qp : q \in S \cup \{0\}\}$ and 
		$\tau_B(v) \not= \varnothing$,
		\item $\tau_S(w) \subseteq \arg_q \max\{w(k-q) + qp : q \in S \cup \{0\}\}$ 
		and $\tau_S(w) \not= \varnothing$,
		\item $\tau_{\cap}(v,w)$ is a tie-breaking function that selects an element in 
		$\tau_B(v) \cap \tau_S(w)$ in case this intersection is non-empty,
		\item $q_B = k-q_S = \begin{cases} \min\{\max \tau_B(v), \max \tau_S(w)\} 
		& \text{ if } d_B \cap d_S = \varnothing , \\ \tau_{\cap}(v,w) & \text{ 
		otherwise. }\end{cases}$,
		\item $\rho_B = -\rho_S = q_Bp$.
	\end{itemize}
	Informally stated, the mechanism offers the buyer and seller a fixed unit price 
	$p$ and a set of quantities $S$. It then asks the buyer and seller which 
	quantity in $S \cup \{0\}$ they would like to trade when for each unit the buyer 
	would pay $p$ to the seller.
	The mechanism then makes the buyer and seller trade the minimum of these 
	two demanded numbers at a unit price of $p$. Typically the preferred quantity 
	is unique for both the buyer and the seller, but in case of indifferences the 
	buyer and seller will specify a set of multiple preferred quantities. In such 
	cases, the tie-breaking functions $\tau_B, \tau_S$ determine which quantities 
	among the sets of indifferences are considered for trade, and the tie-breaking 
	function $\tau_{\cap}$ is finally used to determine the traded quantity in case 
	the sets selected by $\tau_B$ and $\tau_S$ intersect. Otherwise, the minimum 
	of the maximum quantities of $\tau_B$ and $\tau_S$ is traded.
\end{definition}

It turns out that multi-unit fixed price mechanisms characterise the set of all 
DSIC, IR, and SBB mechanisms with respect to monotonically increasing 
submodular valuation functions. Moreover, with the additional restriction that 
$S$ is a singleton set, they characterise the set of all DSIC, IR, and SBB 
mechanisms with respect to monotonically valuation functions.

We first prove sufficiency.
\begin{theorem}\label{thm:suf}
	For all $p \in \mathbb{R}_{\geq 0}$ and $S \subseteq [k]$, the mechanism is 
	IR, SBB, and DSIC with respect to the class of monotonically increasing 
	submodular valuation functions. Moreover, if $|S| = 1$, then 
	$\mathbb{M}_{p,S,\tau}$ is IR, SBB, and DSIC with respect to the class of 
	monotonically increasing valuation functions.
\end{theorem}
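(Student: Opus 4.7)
I verify SBB, IR and DSIC in turn, branching on whether we are in the submodular setting or the special case $|S|=1$ when needed.

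\emph{SBB} is immediate, since $\rho_B + \rho_S = q_B p - q_B p = 0$ unconditionally. For \emph{IR}, a short case analysis on whether $\tau_B(v) \cap \tau_S(w) = \varnothing$ shows that the mechanism's $q_B$ always satisfies both $q_B \leq \max \tau_B(v)$ and $q_B \leq \max \tau_S(w)$. Because $0 \in S \cup \{0\}$ competes in the respective maximisations, we have $v(\max \tau_B(v)) - \max \tau_B(v) \cdot p \geq 0$ and $w(k - \max \tau_S(w)) + \max \tau_S(w) \cdot p \geq w(k)$ directly from the definition of $\tau_B$ and $\tau_S$. For submodular $v$, the function $q \mapsto v(q) - qp$ is concave on $\{0,\dots,k\}$, so its non-negativity at the two endpoints of $[0,\max \tau_B(v)]$ extends to the whole interval, giving $v(q_B) - q_B p \geq 0$. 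Seller IR is entirely symmetric using concavity of $q \mapsto w(k-q)+qp$. For $|S|=1$ we have $q_B \in \{0,s\}$, and direct inspection suffices: whenever $q_B = s$ both $s \in \tau_B(v)$ and $s \in \tau_S(w)$ must hold.

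For \emph{DSIC} I treat the buyer; the seller case is symmetric. Fix the seller's report $w$ and define the achievable set $A(w) = \{q \in S \cup \{0\} : q \leq \max \tau_S(w)\}$. The plan has three sub-steps: (i) every report $v'$ yields $q_B(v',w) \in A(w)$, via the same case analysis used for IR; (ii) every $q \in A(w)$ is attainable by some admissible report $v'$ with $\tau_B(v') = \{q\}$ (for the submodular class, take $v'$ piecewise linear with slope $>p$ on $[0,q]$ and slope $<p$ on $[q,k]$, which is monotone and submodular; for the $|S|=1$ case a two-valued $v'$ works); (iii) the truthful report attains $\max_{q \in A(w)}(v(q)-qp)$. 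For (iii), if either $\tau_B(v) \cap \tau_S(w) \neq \varnothing$ or $\max \tau_B(v) \leq \max \tau_S(w)$, the truthful $q_B$ lies in $\tau_B(v)$, so the buyer's utility equals the constrained global maximum $\max_{q \in S \cup \{0\}}(v(q)-qp)$ and therefore dominates any value over $A(w)$. In the remaining case (empty intersection and $\max \tau_B(v) > \max \tau_S(w)$), the truthful $q_B = \max \tau_S(w) = \max A(w)$, and submodularity shows $q \mapsto v(q)-qp$ is non-decreasing on $A(w)$: if it strictly decreased between $q_1 < q_2$ in $A(w)$, then the non-increasing marginals would force it to keep decreasing for all $q \geq q_2$, and evaluating at $q = \max \tau_B(v)$ would contradict $\max \tau_B(v)$ being a constrained maximiser over $S \cup \{0\}$. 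For $|S|=1$, $A(w) \subseteq \{0,s\}$ and the DSIC argument collapses to a short case check not requiring concavity.

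The chief technical nuisance is the tie-breaking function $\tau_\cap$: a priori $\tau_\cap(v',w)$ may select arbitrary elements of $\tau_B(v') \cap \tau_S(w)$ for different misreports $v'$, which could in principle distort incentives. This is neatly dissolved by the observation that every element of $\tau_B(v)$ yields the same buyer utility, namely the constrained global maximum $\max_{q \in S \cup \{0\}}(v(q)-qp)$, so once we have argued that $q_B \in \tau_B(v)$ the precise tie-breaking choice is immaterial for the incentive analysis.
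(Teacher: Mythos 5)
Your proof is correct and follows essentially the same strategy as the paper's proof of Theorem~\ref{thm:suf}: SBB is immediate, IR follows from concavity of the utility functions over $[0,\max\tau_B(v)]$ (resp.\ $[0,\max\tau_S(w)]$), and DSIC argues that the truthful report already maximises utility over the quantities the agent can steer the mechanism into given the other agent's report. Your explicit framing via the achievable set $A(w)$, together with the concavity-plus-maximiser argument showing $q\mapsto v(q)-qp$ is non-decreasing on $A(w)$, spells out the step the paper compresses to ``by increasingness and submodularity,'' and your observation that every element of $\tau_B(v)$ yields identical buyer utility is exactly the right way to dispose of the tie-breaking freedom; step~(ii) of your plan is not actually needed for DSIC, since (i) and (iii) already suffice.
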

\begin{proof}
	First we prove the statement for the class of monotonically increasing 
	submodular valuation functions.
	The SBB property holds trivially by definition of the mechanism, $\rho_B = 
	-\rho_S$.
	
	Let $v$ and $w$ be increasing submodular valuation functions of the buyer 
	and seller. Let $q_B$ and $\rho_B$ be the quantity given to the buyer and 
	payment made by the buyer under the outcome $\mathbb{M}_{p,S,\tau}(v,w)$. 
	If $\tau_B(v) \cap \tau_S(w)$ is non-empty, then the function $\tau_{\cap}$ 
	selects a utility maximizing quantity for both the buyer and seller, so IR 
	obviously holds in that case. If $\tau_B(v) \cap \tau_S(w) = \varnothing$, the 
	mechanism $\mathbb{M}_{p,S,\tau}$ is IR for the buyer: his utility is $v(q_B) - 
	\rho_B = v(q_B) - q_Bp = v(\min\{\max \tau_B(v),\max \tau_S(w)\}) - \min\{\max 
	\tau_B(v),\max \tau_S(w)\}p$. The value $\max \tau_B(v)$ is defined as a 
	utility-maximizing quantity in $S$ for the buyer, given that the buyer pays $p$ 
	for each unit. If the buyer's valuation function $v$ is submodular, getting any 
	quantity less than $\max \tau_B(v)$ at a price of $p$ per unit will yield the 
	buyer a non-negative utility. Therefore, the buyer's utility is non-negative. For 
	the seller, the argument to establish the IR property is similar: His utility is 
	$w(k - q_B) + \rho_B = w(k - q_B) + q_Bp = w(k - \min\{\max \tau_B(v),\max 
	\tau_S(w)\}) + \min\{\max \tau_B(v), \max \tau_S(w)\}p$. The value $\max 
	\tau_S(w)$ is defined as a utility maximizing quantity in $S$ for the seller to 
	give to the buyer, given that the seller receives a payment of $p$ for each 
	unit. As the buyer's valuation function $v$ is submodular, giving any quantity 
	less than $\max \tau_S(w)$ to the buyer at a price of $p$ per unit will yield the 
	seller a non-negative increase utility. Therefore, the seller's utility increase is 
	non-negative. 
	
	For the DSIC property, observe that if the mechanism sets $q_B \in 
	\tau_S(w)$, then the mechanism chooses the outcome that is the 
	utility-maximizing one for the seller among all outcomes in the range of the 
	mechanism. On the other hand, if $q_B \in \tau_B(v) \setminus \tau_S(w)$ then 
	the seller can only manipulate the outcome by misreporting a valuation that 
	causes $q_B$ to attain a smaller value, and hence in this case the mechanism 
	will select an outcome where a smaller quantity is traded against a price of 
	$p$ per unit. By increasingness and submodularity of the seller's valuation 
	function, this will result in a lower utility for the seller. Hence, it is a dominant 
	strategy for the seller to not misreport his valuation function. For the buyer, 
	the argument is similar: If the mechanism sets $q_B \in \tau_B(v)$, then the 
	mechanism chooses the outcome that is the utility-maximizing one for the 
	buyer among all outcomes in the range of the mechanism. On the other hand, 
	if $q_B \in \tau_S(w) \setminus \tau_B(v)$ then the buyer can only manipulate 
	the outcome by misreporting a valuation that causes $q_B$ to attain a smaller 
	value, and hence in this case the mechanism will select an outcome where a 
	smaller quantity is traded against a price of $p$ per unit. By increasingness 
	and submodularity of the buyer's valuation function, this will result in a lower 
	utility for the buyer. Hence, it is a dominant strategy for the buyer to not 
	misreport his valuation function.

	Next, we prove the statement for the larger class of monotonically increasing 
	valuation functions. Again, the SBB property holds trivially.
	
	As we now work under the assumption that $|S| = 1$, let $q$ be the quantity 
	such that $S = \{q\}$. Let $v$ and $w$ be increasing valuation functions for 
	the buyer and seller respectively. By definition of the mechanism and the 
	increasingness of the valuation functions, it holds that $\tau_B(v) \in 
	\{\{q\},\{0\},\{q,0\}\}$. Likewise, $\tau_S(w) \in \{\{q\},\{0\},\{q,0\}\}$. Therefore, 
	for both the buyer and seller, the traded quantity is $0$ or the unique positive 
	quantity $q$ in case he prefers trading $q$ units at least as much as trading 
	$0$ units. Hence the buyer and seller both experience a non-negative increase 
	in utility for the outcome decided by the mechanism. This establishes IR. For 
	DSIC, observe that if a positive quantity is traded in the selected outcome 
	under truthful reporting, then the only effect that misreporting can achieve is 
	that a quantity of $0$ at a price of $0$ is traded instead, which would leave 
	both the buyer and the seller with a $0$ increase in utility, hence this will not 
	increase either player's utility. If on the other hand a quantity of $0$ is traded 
	at a price of $0$, then $\{0\} = \tau_B(v), 0 \not\in \tau_S(w)$ or $\{0\} = 
	\tau_S(w), 0 \not\in \tau_B(v)$ or $0 \in \tau_B(v), 0 \in \tau_S(w)$. In the first 
	case, clearly the buyer is not incentivised to manipulate the mechanism into 
	producing the alternative outcome where $q$ units are traded, and the seller 
	is unable to manipulate the mechanism into producing that outcome as it 
	selects the minimum of $\tau_S(w)$ and $\tau_B(w)$, where the latter equals 
	$\{0\}$ regardless of the sellers report. For the second case, symmetric 
	reasoning can be applied to conclude that none of the two agents are 
	incentivised to misreport. For the third case, it trivially holds that none of the 
	agents are incentivised to manipulate the mechanism into trading $q$ instead 
	of $0$ units. This establishes DSIC.
\end{proof}

Next, we show necessity, i.e., all DSIC, IR, and SBB direct revelation 
mechanisms are multi-unit fixed price mechanisms.
\begin{theorem}\label{thm:char}
	Let $\mathbb{M}$ be a multi-unit bilateral trade mechanism that is IR, SBB, 
	and DSIC with respect to the class of monotonically increasing submodular 
	valuation functions. Then, there exist $p \in \mathbb{R}_{\geq 0}$, $S 
	\subseteq [k]$, and $\tau$ such that $\mathbb{M} = \mathbb{M}_{p,S,\tau}$. 
	Moreover, if $\mathbb{M}$ is also IR, SBB, and DSIC with respect to the 
	bigger class of monotonically increasing valuation functions, then $|S| = 1$.
\end{theorem}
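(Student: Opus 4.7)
The plan is to prove the two claims of the theorem in sequence: first (for any DSIC/IR/SBB mechanism $\mathbb{M}$ on the submodular class) that $\mathbb{M}=\mathbb{M}_{p,S,\tau}$ for some $p$, $S$, $\tau$, formalising the three-step informal argument of Section~\ref{sec:char}; then, when $\mathbb{M}$ is additionally DSIC/IR/SBB on the wider monotone class, that $|S|=1$ via an explicit IR-violating construction.

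I would begin with a taxation-principle step: payments depend only on the output quantity. Holding the seller's report fixed, DSIC forces the buyer to be charged equal amounts at any two profiles producing the same output, else the buyer facing the higher charge deviates; symmetric reasoning on the seller's side removes the dependence on the seller's report, and SBB ties the two payments together. IR at $q=0$ pins $\rho(0)=0$, yielding a well-defined price function $\rho:\{0,\ldots,k\}\to\mathbb{R}$. Define $S\subseteq[k]$ as the set of positive quantities in the range of $\mathbb{M}$.

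The next step is to show $\rho(q)=pq$ on $S$ for a single per-unit price $p$. Assuming for contradiction that two quantities $q_1<q_2\in S$ have distinct per-unit prices, I would follow the sketch of Theorem~\ref{prop:var-price-DSIC} and construct pairs of submodular valuations exposing a manipulation opportunity based on the differing per-unit prices. The essential intuition is that the ``wrong'' per-unit price gives one side an IR-based lever: by misreporting to a valuation that would make one outcome IR-infeasible, an agent can force the mechanism towards the alternative quantity with the more favourable per-unit price, contradicting DSIC. Executing this rigorously requires a pair of seller valuations agreeing off a designated quantity together with a buyer report whose induced menu contains both $q_1$ and $q_2$ (built via a near-linear interpolation anchored at the witness profiles of $q_1,q_2\in S$), and I expect this to be the main technical obstacle of the whole proof; the symmetric case (buyer getting the bulk discount) is handled by the dual buyer-side construction. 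Once $\rho(q)=pq$ is in hand, the allocation rule follows: submodularity makes $u_B(q)=v(q)-pq$ and $u_S(q)=w(k-q)+pq$ concave on $\{0,\ldots,k\}$, so DSIC and IR force the chosen $q$ to be a joint maximiser of both over a common prefix of $S\cup\{0\}$, yielding $\mathbb{M}(v,w)=\min(\max\tau_B(v),\max\tau_S(w))$ in the disjoint-peaks case, with $\tau_\cap$ capturing the freedom in the intersecting-peaks case.

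For the second claim, assume $\mathbb{M}$ is DSIC/IR/SBB on all monotone valuations; the first claim gives $\mathbb{M}=\mathbb{M}_{p,S,\tau}$. Suppose for contradiction that $q_1<q_2$ both lie in $S$. I would construct a monotone (non-submodular) buyer $v$ with $v(q_1)=\varepsilon$ tiny and $v(q_2)=pq_2+L$ large (interpolated monotonically at the remaining quantities), together with a submodular seller $w$ whose unique preferred quantity over $S\cup\{0\}$ is $q_1$ (achievable by a two-slope construction of $w$ with marginals crossing $p$ at position $k-q_1$). Then $\tau_B(v)=\{q_2\}$ and $\tau_S(w)=\{q_1\}$ are disjoint, so by the characterisation $\mathbb{M}(v,w)$ trades $q_1$ units at total price $pq_1$, leaving the buyer with utility $\varepsilon-pq_1<0$ in violation of IR. This forces $|S|=1$, completing the proof.
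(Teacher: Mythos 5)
Your outline for the first claim broadly tracks the paper's decomposition (price depends only on quantity, the price is linear, and then the allocation rule follows from menu structure and concavity of utilities), and while your ``taxation-principle'' step is stated a bit too quickly --- holding one agent's report fixed only pins down prices within that single menu, so connecting the payment across \emph{different} pairs $(v,w)$, $(v',w')$ producing the same quantity still requires constructing bridging valuations as in the paper's Lemma~\ref{lem:uniqueprice} --- these are issues of detail rather than direction.

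The argument for $|S|=1$, however, has a genuine logical gap. You construct a non-submodular buyer $v$ with $v(q_1)$ tiny and $v(q_2)$ large against a submodular seller $w$ whose unique preferred quantity is $q_1$, and then invoke ``the characterisation'' to say that $\mathbb{M}$ trades $q_1$, violating IR. But the first claim of the theorem only identifies $\mathbb{M}$ with $\mathbb{M}_{p,S,\tau}$ on \emph{submodular} inputs; it says nothing about what $\mathbb{M}$ does when the buyer reports a non-submodular monotone $v$. What the first claim \emph{does} pin down is the menu $N_w$ offered by the (submodular) seller: by Lemma~\ref{lem:menustructure}, $N_w$ contains only quantities in $S\cup\{0\}$ up to $q_1$. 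DSIC on the monotone class then forces $\mathbb{M}$ to hand the buyer its favourite outcome in $N_w$, and since $v(q'')-pq''<0$ for every positive $q''\le q_1$ in $S$, that favourite is simply $0$. The mechanism trades nothing, IR holds, and there is no contradiction. In short, a DSIC mechanism need not follow the $\min$ rule once an agent's peak is not a plateau of a concave utility; your scenario is exactly the kind the mechanism can escape by declining to trade.

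The paper's actual argument is a two-sided trap: it constructs a non-submodular buyer $v^*$ with bumps at both $q$ and $q'$ (so the buyer strictly prefers $q$ yet can IR-accept $q'$) \emph{and} a non-submodular seller $w^*$ with the mirror structure (strictly prefers $q'$ yet can IR-accept $q$), and then works hard to show both $(q,p(q))$ and $(q',p(q'))$ lie in both $M_{v^*}$ and $N_{w^*}$. Only once both outcomes are genuinely available to both agents does DSIC yield the contradiction: the buyer must get $q$ and the seller must get $q'$, and they cannot both be served. Establishing membership of both outcomes in both menus is the crux (and is nontrivial --- it needs further auxiliary valuations), and it is precisely what your construction with a submodular seller cannot deliver, since the seller's menu is then capped at its unique peak.
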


We divide this proof up into several lemmas. We start by proving the theorem for 
the smaller class of monotonically increasing submodular valuation functions. 
First, we show that whenever $\mathbb{M}$ trades the same number of items for 
two distinct pairs of valuation functions, then it must charge the same payments. 
Second, we extend this by showing that whenever the mechanism trades 
distinct numbers of items for any two distinct pairs of valuation functions, then 
the mechanism must charge the same price proportional to the number of items 
traded. It follows that we may associate to $\mathbb{M}$ a unit price $p$ such 
that the payment from the buyer to the seller is always $q_Bp$, where $q_B$ is 
the traded quantity.
Lastly, we show that there is a set $S$ such that the range of quantities that the 
seller may let the mechanism trade from (by means of reporting a valuation 
function to the mechanism), is equal to $(S \cup \{0\}) \cap [\arg_q\max\{v(q) - pq 
: q \in S \cup \{0\}\}]$. By the fact that the valuation functions are increasing and 
submodular, and by the fact that $\mathbb{M}$ is DSIC, it follows that truthful 
reporting of the seller will result in the mechanism trading 
\begin{equation*}
	\arg_q \max\{w(k-q) + pq : q \in (S \cup \{0\}) \cap [\arg_q\max\{v(q) - pq : q \in 
	S \cup \{0\}\}]\} 
\end{equation*}
units. This expression is equal to $\min\{\max d_B, \max d_S\}$ if $d_B \cap 
d_S = \varnothing$ (where $d_B$ and $d_S$ are defined as in Definition 
\ref{def:char}) , and otherwise it is a set from which an arbitrary quantity 
$\tau(v,w)$ may selected. This implies that $\mathbb{M} = 
\mathbb{M}_{p,S,\tau}$ for the appropriate choices of $p$, $S$, and $\tau$.

With respect to the larger class of monotonically increasing valuation functions, 
the set of DSIC, IR, and SBB mechanisms must be smaller. We prove for this 
class that whenever the set $S$ consists of more than one quantity, then there 
must be a pair of valuation functions in which either the buyer or seller is better 
off by not truthfully reporting his valuation function.

We now proceed by stating and proving formally the claims sketched above.
In the proofs of the claims below, we use the following terminology and 
notation. 
For ease of exposition, we denote from now on an outcome by a pair $(q, p)$ 
where $q$ is the traded number of units (i.e., the quantity that the buyer gets 
assigned) and $p$ is the payment of the buyer, which is equal to the negated 
payment of the seller by the SBB property.
For a reported valuation $v$ of the buyer, let $M_v = \{o \in \mathcal{O}\ |\ 
\exists w : \mathbb{M}(v,w) = o\}$ be the menu of outcomes offered to the seller 
when the buyer reports $v$. That is, when the buyer reports $v$, the seller can 
select one of the outcomes $o$ in $M_v$ by reporting (not necessarily truthfully) 
some valuation in reply to $v$. Likewise, we let $N_w = \{o \in \mathcal{O}\ |\ 
\exists v : \mathbb{M}(v,w) = o\}$ be the menu of outcomes offered to the buyer 
when the seller reports $v$. We let $M = \bigcup_v M_v = \bigcup_w N_w$ be 
the set of all outcomes that the mechanism can produce, and we let $S$ be the 
projection of $M$ on the quantity obtainable by the buyer (i.e., the set $S$ 
consists of all quantities that the mechanism can possibly trade). 

The next lemma shows that there is a unique payment that the mechanism 
charges for every quantity in $S$, which implies that $\mathbb{M}$ consists of 
at most $k$ outcomes.

\begin{lemma}\label{lem:uniqueprice}
	Let $\mathbb{M}$ be IR, SBB, and DSIC with respect to the class of valuation 
	functions $\mathcal{C}$, where $\mathcal{C}$ is either $\mathcal{S}_k$ or 
	$\mathcal{I}_k$.
	Let $(v,w)$ and $(v',w')$ be two pairs in $\mathcal{C}^2$. Let 
	$\mathbb{M}(v,w) = (q_B,\rho_B)$ and $\mathbb{M}(v',w') = (q_B',\rho_B')$. If 
	$q_B = q_B'$, then $\rho_B = \rho_B'$.
\end{lemma}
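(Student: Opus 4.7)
The plan is to first establish a purely within-menu price uniqueness statement and then reduce the general cross-pair case to it by constructing a single adversarial buyer valuation that forces trade of the contested quantity under both witness pairs. For any fixed buyer report $v$, seller DSIC applied to the mechanism $\mathbb{M}(v,\cdot)$ implies that if $(q,\rho)$ and $(q,\rho')$ both lie in $M_v$, realised by $\mathbb{M}(v,w)=(q,\rho)$ and $\mathbb{M}(v,w')=(q,\rho')$, then the first outcome being best in $M_v$ for $w$ gives $\rho \geq \rho'$ (the $w(k-q)$ term cancels) while the second being best for $w'$ gives $\rho' \geq \rho$, so $\rho = \rho'$. The analogous buyer-DSIC argument on $N_w$ gives the same uniqueness inside each seller's menu, and moreover shows that the unique payment $\pi_w(q')$ associated with each reachable quantity inside $N_w$ is strictly increasing in $q'$, since otherwise a strictly increasing buyer valuation would strictly prefer the cheaper, larger-quantity outcome.

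To prove the lemma itself, I suppose for contradiction that $\mathbb{M}(v_1,w_1)=(q,\rho_1)$ and $\mathbb{M}(v_2,w_2)=(q,\rho_2)$ with $\rho_1 \neq \rho_2$; the case $q=0$ is trivial since SBB and IR force $\rho=0$ whenever no trade occurs. The goal is to construct a single $v^* \in \mathcal{C}$ such that $\mathbb{M}(v^*,w_i)$ trades quantity exactly $q$ for both $i=1,2$; then $(q,\rho_1)$ and $(q,\rho_2)$ would both lie in $M_{v^*}$, contradicting in-menu uniqueness. The witness I will use has a sharp ``kink'' at $q$: set $v^*(x)=xM$ for $x \leq q$ and $v^*(x)=qM+(x-q)\varepsilon$ for $x>q$, with $M$ large, $\varepsilon>0$ small, and $M \geq \varepsilon$. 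This makes $v^*$ strictly increasing with non-increasing marginals, so $v^* \in \mathcal{S}_k \subseteq \mathcal{I}_k$.

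For each $i \in \{1,2\}$ and each reachable $q'<q$ in $N_{w_i}$, the utility gap $(qM-\rho_i)-(q'M-\pi_{w_i}(q')) = (q-q')M - (\rho_i - \pi_{w_i}(q'))$ is positive whenever $M$ exceeds the finitely many thresholds $\rho_i - \pi_{w_i}(q')$; for each reachable $q'>q$, the gap $(\pi_{w_i}(q')-\rho_i) - (q'-q)\varepsilon$ is positive once $\varepsilon < (\pi_{w_i}(q')-\rho_i)/(q'-q)$, a bound that is strictly positive by the strict price-monotonicity of $\pi_{w_i}$. Choosing $M$ large and $\varepsilon$ small enough to meet these finitely many inequalities, $v^*$ strictly prefers the quantity-$q$ outcome in both $N_{w_1}$ and $N_{w_2}$, so by buyer DSIC and in-menu uniqueness $\mathbb{M}(v^*,w_i)=(q,\rho_i)$ for $i=1,2$. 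Both outcomes then lie in $M_{v^*}$, contradicting its in-menu uniqueness and completing the proof. The main obstacle is precisely this simultaneous-dominance construction: one has to beat every alternative in two distinct buyer menus at once while staying inside $\mathcal{S}_k$, which works because strict price-monotonicity gives room for a uniformly small $\varepsilon$ and submodularity merely requires non-increasing marginals, easily arranged by the steep-then-flat profile.
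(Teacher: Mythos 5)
Your proof is correct, and it takes a genuinely different and cleaner route than the paper's. Both arguments begin the same way, establishing within-menu uniqueness of payments (for each quantity in $M_v$ there is a single payment, by DSIC of the opposing agent and cancellation of the valuation term under SBB). From there the paths diverge. The paper's proof builds a seller valuation $w^*$ that is steep up to $k-q_B$ and nearly flat afterward, shows that under $(v,w^*)$ and $(v',w^*)$ the payments are still $\rho_B$ and $\rho_B'$ even if fewer items are traded, and then iteratively modifies $w^*$ into a $w^{**}$ under which \emph{both} buyer reports trade the same number of units; only then does it derive a contradiction from a buyer deviation. Your proof bypasses that iterative construction entirely. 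You first observe the useful additional fact that strict monotonicity of valuations (already built into both $\mathcal{S}_k$ and $\mathcal{I}_k$) forces the payment schedule $\pi_w$ to be \emph{strictly} increasing over the reachable quantities in $N_w$; you then design a single buyer valuation $v^* \in \mathcal{S}_k$ (steep at rate $M$ up to $q$, flat at rate $\varepsilon$ afterward) whose utility is strictly maximised at the quantity-$q$ outcome in both menus $N_{w_1}$ and $N_{w_2}$ simultaneously, since each menu is finite and the strict price gaps $\pi_{w_i}(q')-\rho_i > 0$ for $q'>q$ give room for a uniformly small $\varepsilon$, while the gaps to smaller quantities are beaten by taking $M$ large. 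Buyer DSIC then pins $\mathbb{M}(v^*,w_i)=(q,\rho_i)$ for both $i$, placing both $(q,\rho_1)$ and $(q,\rho_2)$ in $M_{v^*}$ and contradicting in-menu uniqueness. The key structural insight you add — strict price monotonicity within a menu, justified by strict increasingness of valuations — is what lets the single-kink construction suffice where the paper needs iteration. Your treatment of the $q=0$ base case via IR and SBB is also correct and worth stating explicitly, as the paper glosses over it.
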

\begin{proof}
	As $\mathbb{M}$ is DSIC, it is immediate that for every $v''$ and for every 
	quantity $q$ it holds that there are no two distinct payments $p, p'$ such that 
	$(q,p)$ and $(q, p')$ are both in $M_{v''}$. Also, let $(q,p)$ and $(q',p')$ be in 
	$M_{v''}$, where $q < q'$. Then $p \leq p'$, as otherwise there are valuations 
	of the seller where misreporting results in trading less items at a higher price, 
	which would violate the DSIC property. 
	
	Let $(v,w)$ and $(v',w')$ be as in the statement of the lemma, i.e., such that 
	$q_B = q_B'$. 
	When the buyer reports $v$ and the seller reports $w$, by assumption $(q_B, 
	\rho_B)$ is the outcome, so $(q_B,\rho_B) \in M_v$.
	Define the valuation function $w^*$ as the function that grows linearly, 
	extremely steeply
	up to the quantity $k-q_B = k- q_B'$, and grows extremely slowly at a rate of 
	$\epsilon > 0$ from $k - q_B$ onward. 
	We define the function $v^*$ similarly: It grows at an extremely high rate up to 
	the quantity $q_B = q_B'$ and grows at extremely slow rate $\epsilon$ from 
	$q_B'$ onward.
	
	We first consider a deviation by the seller from $(v,w)$ to $(v,w^*)$.
	Let $(q,p) \in M_v$ be the outcome of the mechanism on report $(v,w^*)$.
	As we have chosen the valuation $w^*$ to be sufficiently steep up to 
	$k-q_B$ items, IR would be violated for a seller with valuation $w^*$ if more 
	than $q_B$ items are traded. 
	
	Suppose now that upon report $(v,w^*)$ the mechanism trades strictly less 
	than $q_B$ items. i.e., $q < q_B$. We prove that then, $p = \rho_B$: If we 
	would assume $p < \rho_B$, then a seller with valuation $w^*$ would 
	misreport $w$, as in terms of valuation he is practically indifferent between 
	trading $q$ and $q_B$ items ($\epsilon$ needs to be chosen small enough 
	for this), and his received payment would increase from $p$ to $\rho_B$. If on 
	the other hand we would assume that $p > \rho_B$, then a seller with 
	valuation $w$ would misreport $w^*$ as he would retain more items, and 
	receive a higher payment. Thus $p = \rho_B$. 
	
	By entirely analogous reasoning, when $(v',w^*)$ is reported to the 
	mechanism, the mechanism also trades $q_B' = q_B$ items or less. Let $q' < 
	q_B$ be the number of traded items under $(v',w^*)$. The payment is equal 
	to $\rho_B'$, and if less than $q_B$ items are traded, then $w'(k-q') = 
	w'(k-q_B)$.
	
	Next, we define from $w^{*}$ a valuation function $w^{**}$ for which it holds 
	that under $(v,w^{**})$ and $(v',w^{**})$ the same number of items is traded at 
	prices $\rho_B$ and $\rho_B'$ respectively. We do this as follows: If $q' = q$ 
	then we simply let $w^{**}$ be $w^*$. Otherwise, if $q' \not= q$, assume 
	without loss of generality that $q' > q$ and let $\bar{w}^*$ be the valuation 
	function that grows extremely steeply up to $k-q$ units and increases 
	extremely slowly after $k-q$ units. By considering the deviation by the seller 
	from profile $(v,w^*)$ to profile $(v,\bar{w}^{*})$, we see that under 
	$(v,\bar{w}^*)$ at most $q'$ units are traded and the payment is still equal to 
	$q_B$. Likewise, under $(v',\bar{w}^*)$ at most $q'$ items are traded and the 
	payment is $q_B'$. Thus, the minimum number of traded items among the 
	pair of strategy profiles $(v,\bar{w}^{*})$ and $(v',\bar{w}^{*})$ is larger than the 
	minimum number of traded items among the pair $(v,\bar{w}^{*})$ and 
	$(v',\bar{w}^{*})$. Repeating this operation will thus eventually yield a strategy 
	profile $w^{**}$ such that under $(v,w^{**})$ and $(v',w^{**})$ the same number 
	of items is traded at prices $\rho_B$ and $\rho_B'$ respectively.
	
	Now, if we would suppose for contradiction that $\rho_B \not= \rho_B'$, then 
	we may assume without loss of generality that $\rho_B < \rho_B'$. When the 
	seller reports $w^{**}$, a buyer with valuation function $v$ would now be 
	incentivised to report the valuation function $v'$ instead of $v$, since then his 
	payment decreases, and he still receives the same number of items. This is a 
	contradiction to the DSIC property. Therefore, $\rho_B = \rho_B'$ which 
	proves our claim.
\end{proof}

By Lemma \ref{lem:uniqueprice} there is a unique payment for each quantity $q 
\in S$, and we denote this payment by $p(q)$. 
The next lemma extends the previous lemma by stating essentially that 
payments must grow linearly with the number of allocated items, when one of 
the players changes his reported valuation.
\begin{lemma}\label{lem:linearprice}
	Let $\mathbb{M}$ be IR, SBB, and DSIC with respect to the class 
	$\mathcal{C}$, where $\mathcal{C}$ is either $\mathcal{S}_k$ or 
	$\mathcal{I}_k$. 
	Let $(v,w)$ and $(v',w')$ be two pairs in $\mathcal{C}^2$. Let 
	$\mathbb{M}(v,w) = (q_B,q_S,\rho_B,\rho_S)$ and $\mathbb{M}(v',w') = 
	(q_B',q_S',\rho_B',\rho_S')$. If $q_B > 0$, then $\rho_B' = (q_B'/q_B)\rho_B$.
\end{lemma}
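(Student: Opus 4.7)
My plan is to leverage Lemma~\ref{lem:uniqueprice}, which already establishes that there is a well-defined payment $p(q)$ associated to each quantity $q \in S$, so that in the notation of the lemma $\rho_B = p(q_B)$ and $\rho_B' = p(q_B')$. The claim then reduces to showing that $p(q)/q$ is constant over $q \in S \setminus \{0\}$. The degenerate case $q_B' = 0$ is immediate: the buyer's IR for the null outcome forces $\rho_B' \leq 0$, while the seller's IR combined with SBB forces $\rho_B' \geq 0$, hence $\rho_B' = 0 = (0/q_B)\rho_B$.

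For the main case where both $q_B$ and $q_B'$ are positive, I would argue by contradiction. Suppose there exist $q_1 < q_2$ in $S \setminus \{0\}$ with $p(q_1)/q_1 \neq p(q_2)/q_2$. Up to swapping the roles of buyer and seller, it suffices to treat the sub-case $p(q_1)/q_1 < p(q_2)/q_2$: the opposite inequality $p(q_1)/q_1 > p(q_2)/q_2$ is handled by a completely symmetric argument in which the buyer, rather than the seller, performs the profitable manipulation. Pick $c$ strictly between the two per-unit prices and construct an increasing submodular seller valuation $w^\star$ whose marginal-loss function $\check{w}^\star$ equals $c$ on $\{1,\dots,q_2\}$ and then jumps to an arbitrarily large value beyond $q_2$. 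By direct computation, the outcome $(q_1, p(q_1))$ violates $w^\star$'s IR (since $q_1 c > p(q_1)$), while $(q_2, p(q_2))$ gives $w^\star$ strictly positive utility. Hence the mechanism's response to any truthful report of $w^\star$ must be either the null outcome or an outcome of traded quantity at least $q_2$, yielding utility at least $p(q_2) - q_2 c > 0$.

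Starting from any pair $(v_0, w_0)$ with $\mathbb{M}(v_0, w_0) = (q_1, p(q_1))$, whose existence is guaranteed by $q_1 \in S$, I would mimic the extremal-valuation substitution technique used in the proof of Lemma~\ref{lem:uniqueprice}: iteratively replace $v_0$ and $w_0$ by ultra-steep submodular valuations that preserve the truthful outcome $(q_1, p(q_1))$ together with its payment, until I arrive at a pair $(\bar v, \bar w)$ such that (i) the truthful outcome is still $(q_1, p(q_1))$, and (ii) $\bar w$'s utility at $q_1$ and $q_2$ is small enough that the deviation $\bar w \mapsto w^\star$ is strictly beneficial in terms of payment. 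Because the response $\mathbb{M}(\bar v, w^\star)$ lies in $M_{\bar v}$, must respect $w^\star$'s IR, and is $w^\star$-optimal, the construction of $w^\star$ forces the quantity to move up from $q_1$ to at least $q_2$; since payments are non-decreasing in quantity and $p(q_2) > p(q_1)$, the seller strictly benefits from this misreport, contradicting DSIC.

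The main obstacle will be clause (ii): ensuring that after the substitutions some outcome of quantity $\geq q_2$ is actually available in $M_{\bar v}$, so that the mechanism's response to $w^\star$ is not merely collapsing to $(0,0)$. I expect this step to require a careful interleaving of ``steep-then-flat'' buyer and seller valuations analogous to the $w^\ast, v^\ast, w^{\ast\ast}, \bar w^\ast$ chain that appears in Lemma~\ref{lem:uniqueprice}, together with a case split on whether $(q_2, p(q_2))$ is already in $M_{v_0}$: if it is, the deviation argument applies directly; if not, one migrates to a buyer valuation that exposes such an outcome without disturbing the truthful quantity $q_1$. The symmetric sub-case $p(q_1)/q_1 > p(q_2)/q_2$ is handled by reversing the roles of buyer and seller and running the mirror-image argument to produce a DSIC violation on the buyer's side, completing the contradiction.
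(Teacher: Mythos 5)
Your reduction to showing that $p(q)/q$ is constant on $S\setminus\{0\}$ is exactly right, and your treatment of the degenerate $q_B'=0$ case via IR and SBB is correct. The construction of $w^\star$ with constant marginal loss $c$ strictly between the two per-unit prices is also the right adversarial valuation to consider.

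However, there is a genuine gap, which you yourself flag as ``the main obstacle'' but do not resolve: your argument needs, simultaneously, that (a) the mechanism outputs $(q_1,p(q_1))$ on a truthful report $(\bar v,\bar w)$, (b) some outcome of quantity at least $q_2$ is available in $M_{\bar v}$, and (c) the seller $\bar w$ strictly prefers that larger outcome. But (a)--(c) together already contradict DSIC without the misreport to $w^\star$, so the $w^\star$-deviation is not doing the work; the entire burden sits on establishing (a)--(c) jointly, and that is precisely the technical content you have not supplied. Indeed, constructing a $\bar w$ that is flat enough beyond $k-q_2$ to make (c) hold, while a DSIC mechanism is forced to output $(q_1,p(q_1))$ on $(\bar v,\bar w)$, requires that $(q_2,p(q_2))\notin M_{\bar v}$ --- exactly the opposite of (b). So the proposal as written oscillates between two incompatible requirements, and the ``migration'' you gesture at to resolve this is where essentially all of the paper's work lies. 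The paper avoids this tension by not fixing the mechanism's output on a truthful profile at all: it constructs a single pair $(v^*,w^*)$ and shows by careful extremal-valuation arguments that both outcomes $(q,p(q))$ and $(q',p(q'))$ lie in both $M_{v^*}$ and $N_{w^*}$, with $v^*$ and $w^*$ strictly preferring opposite members of the pair; DSIC for the buyer and DSIC for the seller then impose contradictory demands on $\mathbb{M}(v^*,w^*)$. You would also need to explicitly establish monotonicity of $p(\cdot)$ (you assert ``payments are non-decreasing in quantity'' but this must be proved --- the paper does so as the first step of its argument), and you should restrict attention to quantities $q_1 < q_2$ that are adjacent in $S$ to avoid interference from intermediate menu items, as the paper does by taking $q$ and $q'$ to be the least witness of non-linearity and the quantity just below it.
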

\begin{proof}
	First we show that $p(\cdot)$ is a non-decreasing function. Suppose that this 
	is not true, and assume that $(q, p(q))$ and $(q', p(q))$ is the pair of outcomes 
	in $M$ such that (i) $q'>q$ and $p(q) > p(q')$, (ii) there is no $q'' \in S$ such 
	that $q'>q''>q$ and (iii) $q$ is minimal. Let $(v,w)$ and $(v',w')$ be two 
	valuation profiles that result in these two respective outcomes $(q, p(q))$ and 
	$(q', p(q))$. Let $w^*$ be a valuation function that increases linearly at an 
	extremely high rate up to $k - q'$ and increses extremely slowly afterward. 
	We now see that when $(v,w^*)$ is reported, $q'$ units or less are traded due 
	to IR, and in fact at most $q$ units are traded due to DSIC (because, if $q'$ 
	units were traded, a seller with valuation $w^*$ would misreport $w$ to trade 
	less units for more money), and no less than $q''$ units are traded due to 
	DSIC, where $q'' \leq q$ is the least quantity such that $p(q'') = p(q)$. 
	(Otherwise the seller with valuation $w^*$ could misreport $w$ and trade 
	more units at almost the same valuation, for significantly less money. We use 
	here that $w^*$ increases sufficiently slowly on  the interval $[k-q', k]$).
	We also observe that when $(v',w^*)$ is reported, (i) $q'$ units or less are 
	traded due to IR, (ii) the traded quantity cannot be any quantity with a higher 
	payment than $p(q')$ since otherwise a seller with valuation $w'$ would 
	misreport $w^*$ if the buyer reports $v'$, and (iii) the traded quantity cannot 
	be any quantity with a lower payment than $p(q')$ since otherwise a seller 
	with valuation $w^*$ would misreport $w'$. Thus, exactly $q'$ units are 
	traded when $(v',w^*)$ is reported. We conclude that $(q,p(q))$ and $(q',p(q'))$ 
	are both in $N_{w^*}$, and therefore a buyer with valuation $v$ would have an 
	incentive to misreport $v'$ if the seller reports $w^*$, which violates DSIC 
	and yields a contradiction. We conclude that the payment function $p(\cdot)$ 
	is non-decreasing.
	
	Also, note that $0 \in S$ and $p(0) = 0$ as otherwise the IR property would be 
	violated for a buyer whose valuation is identically $0$ and a seller whose 
	valuation is strictly increasing: When such a valuation profile is reported, the 
	seller's valuation implies that no positive number of items can be traded for 
	payment 0; the buyer's valuation implies that no positive number of items can 
	be traded for a positive payment; so $0$ items must be traded for a payment 
	that is not positive (due to IR of the buyer) and not negative (due to IR of the 
	seller). 

	The claim of this lemma is equivalent to the claim that there exists a unique 
	value $p$ such that $p(q) = pq$ for all $q \in S$. We will demonstrate this by 
	means of contradiction: Suppose that there is no such value $p$. Let $q$ be 
	the lowest quantity in $S$ such that $p(q'') \not= p(q)q''/q$ for all $q'' \in S, 
	q'' < q$. Let $q'$ be the highest quantity in $S$ such that $p(q'') = p(q')q''/q'$ 
	for all $q'' \in S, q'' < q'$. Note that there is no $q'' \in S$ such that $q' < q'' 
	< q$, and that $p(\cdot)$ behaves linearly up to $q'$, and that $q$ essentially 
	serves as the least witness for the non-linearity of $p(\cdot)$.
	
	We distinguish two cases: The case where $p(q)q'/q > p(q')$, and the case 
	where $p(q)q'/q < p(q')$. First let us assume that $p(q)q'/q > p(q')$. We will 
	derive a contradiction by constructing valuation functions $(v^*,w^*)$ for the 
	buyer and seller such that the following properties are satisfied: (i) outcomes 
	$(q, p(q))$ and $(q',p(q'))$ are in $M_{v^*}$ and $N_{w^*}$; (ii) the buyer with 
	valuation $v^*$ strictly prefers outcome $(q',p(q'))$ over all outcomes in 
	$N_{w^*} \setminus \{(q',p(q'))\}$; and (iii) the seller with valuation $w^*$ 
	strictly prefers outcome $(q, p(q))$ over all outcomes in $M_{v^*} \setminus 
	\{(q,p(q))\}$. This is a contradiction because (ii) requires that the mechanism 
	outputs $(q',p(q'))$ (otherwise the buyer with valuation $v^*$ would be 
	incentivised to misreport so that $(q',p(q'))$ is output) and (iii) requires that the 
	mechanism outputs $(q,p(q))$ (otherwise the seller with valuation $w^*$ would 
	be incentivised to misreport so that $(q,p(q))$ is output). 
	
	Therefore, we will now define the appropriate valuations $v^*$ and $w^*$. Let 
	$v^*$ be a valuation function that grows linearly at an extremely high rate up 
	to quantity $q'$ and increases extremely slowly afterward. This causes all 
	outcomes where a positive quantity is traded a positive utility for the buyer 
	with valuation $v^*$, moreover, the maximum utility for such a buyer is 
	achieved at outcome $(q',p(q'))$. This already establishes property (ii). To see 
	that property (i) holds, let $w$ be any seller's valuation so that $(q',p(q')) \in 
	N_w$ (which must exist because $q' \in S$). By the definition of $v^*$, the 
	mechanism selects outcome $(q',p(q'))$ on report $(v^*,w)$ and therefore 
	$(q',p(q')) \in M_{v^*}$. Now, consider any valuation profile $(v,w)$ that results 
	in outcome $(q,p(q))$, so that $(q,p(q)) \in M_v$. Let $w'$ be the valuation 
	function that grows linearly at an extremely high rate up to quantity $k-q$ and 
	after that point grows linearly at a rate of $p(q)/q-\epsilon/q$ up to quantity 
	$k$. The initial increase up to point $k-q$ is so steep that the seller can never 
	experience a utility above $w'(k)$ when any quantity higher than $q$ is traded. 
	The value $\epsilon > 0$ is chosen to be so small that the only outcome at 
	which a seller with valuation $w'$ has a positive utility is $(q,p(q))$. Therefore, 
	upon report $(v,w')$ the mechanism outputs $(q,p(q))$ and we may infer that 
	$N_{w'} = \{(q,p(q)),(0,0)\}$, from which it follows that $(q,p(q)) \in M_{v^*}$, as 
	$(q,p(q))$ must be the selected outcome upon report $(v^*,w)$ (by the DSIC 
	property). This establishes property (i) for $v^*$.
	
	For valuation function $w^*$, let $w^*$ increase linearly at an extremely high 
	rate up to quantity $k-q$, and increase extremely slowly afterwards. Clearly, 
	the seller with valuation $w^*$ prefers the outcome $(q,p(q))$ among all 
	outcomes in $M$, which establishes property (iii). Let $(v,w)$ be any report 
	upon which the mechanism outputs $(q,p(q))$, so that $(q,p(q)) \in M_v$. Then 
	$(q,p(q))$ is also output upon report $(v,w^*)$ which establishes that $(q,p(q)) 
	\in N_{w^*}$. Next, let $(v,w)$ be any report upon which the mechanism 
	outputs $(q',p(q'))$, so that $(q',p(q')) \in N_w$. Let $v'$ be a function that 
	increases at rate $p(q')/q' + \epsilon/q'$ for sufficiently small $\epsilon > 0$ 
	up to quantity $q'$, and increases extremely slowly afterward. Then 
	$(q',p(q'))$ is output when $(v',w)$ is reported, so that $(q',p(q')) \in M_{v'}$. 
	Moreover, trading any quantity higher than $q'$ would yield a negative utility 
	for a buyer with valuation $v'$ (because $\epsilon$ is extremely small). 
	Therefore, for $q'' > q'$ it holds that $(q'' , p(q'')) \not\in M_{v'}$ and in 
	particular $(q, p(q)) \not\in M_{v'}$. Thus, when $(v',w^*)$ is reported, the 
	outcome $(q',p(q'))$ is output by the mechanism, and this establishes that 
	$(q',p(q')) \in N_{w^*}$. Note that here we need that $p(q') > 0$, which is the 
	case as the outcome the mechanism returns on $(v',w')$ is IR by assumption 
	and the valuation functions are monotonically increasing. This completes the 
	proof for the case where $p(q)q'/q > p(q')$.

	For the case where $p(q)q'/q < p(q')$ we proceed in a similar fashion: Again, 
	we will derive a contradiction by constructing valuation functions $(v^*,w^*)$ 
	for the buyer and seller such that (i) outcomes $(q, p(q))$ and $(q',p(q'))$ are in 
	$M_{v^*}$ and $N_{w^*}$; (ii) the buyer with valuation $v^*$ strictly prefers 
	outcome $(q,p(q))$ over all options in $N_{w^*} \setminus \{(q,p(q))\}$; and (iii) 
	the seller with valuation $w^*$ strictly prefers outcome $(q', p(q'))$ over all 
	options in $M_{v^*} \setminus \{(q,p(q))\}$. This is a contradiction because (ii) 
	requires that the mechanism outputs $(q,p(q))$ (otherwise the buyer with 
	valuation $v^*$ would be incentivised to misreport so that $(q,p(q))$ is output) 
	and (iii) requires that the mechanism outputs $(q',p(q'))$ (otherwise the seller 
	with valuation $w^*$ would be incentivised to misreport so that $(q',p(q'))$ is 
	output). Note that the difference with the previous case is that here we 
	construct $v^*$ such that the higher of the two quantities $q$ and $q'$ is 
	preferred, instead of the lower one. Likewise, $w^*$ is now constructed such 
	that the lower of the two quantities is preferred instead of the higher one.
	
	We start in this case with the construction of $w^*$. Let $w^*$ be a valuation 
	function that increases linearly at an extremely high rate up to quantity $k-q$. 
	From $k-q$ to $k-q'$, valuation $w^*$ increases by an amount of $p(q) - p(q') 
	+ \epsilon$, where $\epsilon > 0$ is sufficiently small, and $w^*$ increases 
	extremely slowly from $k-q'$ onward. The increase in valuation from 
	quantities $k-q$ to $k-q'$ is slightly higher than the amount by which the 
	payment changes among the quantities $q$ and $q'$, this causes the seller 
	with valuation $w^*$ to encounter a slightly lower (but positive) increase in 
	utility when quantity $q$ is traded instead of quantity $q'$. Moreover, among 
	all quantities in $S$ up to $q'$, the maximum utility for a seller with valuation 
	$w^*$ is achieved at quantity $q'$, which already establishes property (iii). 
	Lastly, note that due to the extreme steepness of $w^*$ up to $k-q$, the 
	utility of the seller is lower than $w^*(k)$ when any quantity higher than $q$ is 
	traded, so the mechanism will never do so by the IR constraint. It remains to 
	establish property (i). Let $(v,w)$ be any report where the mechanism selects 
	outcome $(q',p(q'))$. It follows by DSIC that outcome $(q',p(q'))$ will also be 
	selected on report $(v,w^*)$, so that $(q',p(q')) \in N_{w^*}$. Next, let $(v,w)$ 
	be any report where the mechanism selects outcome $(q,p(q))$. Let $w'$ be a 
	valuation function that increases extremely steeply up to $k-q$ and increases 
	extremely slowly afterwards, so that the report $(v,w')$ results in $(q,p(q))$ 
	and hence $(q,p(q)) \in N_{w'}$, and because of IR we also infer that 
	$(q'',p(q'')) \not\in M_{w'}$ when $q'' > q$. Let $v'$ be a valuation function 
	that increases linearly up to quantity $q$ and increases extremely slowly 
	afterwards, where $v(q) = p(q) + \epsilon$, and $\epsilon > 0$ is sufficiently 
	small. Note that trading a positive quantity lower than $q$ would result in a 
	negative utility for a buyer with valuation $v'$, so that such outcomes are not 
	in $M_{v'}$. Therefore, when $(v',w')$ is reported the outcome selected by the 
	mechanism must be $(q,p(q))$, which shows that $(q,p(q)) \in M_{v'}$. It 
	follows now that the selected outcome upon report $(v',w^*)$ must be 
	$(q,p(q))$ which yields $(q,p(q)) \in N_{w^*}$ and establishes property (i) for 
	$w^*$.
	
	Lastly, we design $v^*$. Let $v^*$ simply increase extremely steeply up to 
	the quantity $q$, and increase extremely slowly  afterwards, so that a buyer 
	with valuation $v^*$ experiences positive utility for all outcomes in $M$, and 
	maximum utility when outcome $(q,p(q))$ is selected. This straightforwardly 
	establishes property (ii). For property (i), let $(v,w)$ be any profile where 
	outcome $(q,p(q))$ results, so that $(q,p(q)) \in N_w$. By DSIC, outcome 
	$(q,p(q))$ is also selected when $(v^*,w)$ is reported, so $(q,p(q)) \in 
	M_{v^*}$. Next, let $(v,w)$ be any profile where outcome $(q',p(q'))$ results, 
	so $(q',p(q')) \in M_v$. Let $w'$ be a function that increases extremely steeply 
	up to quantity $k-q'$, and increases extremely slowly afterwards, so that 
	trading any quantity higher than $q'$ would result in a decrease in utility for a 
	seller with valuation $w'$ (hence the mechanism cannot trade such quantities 
	when $w'$ is reported, by the IR property), and the maximum increase in 
	utility is achieved when $(q',p(q'))$ is chosen. Therefore reporting $(v,w')$ 
	results in outcome $(q',p(q'))$, thus $(q',p(q')) \in N_{w'}$ and $(q'',p(q'')) 
	\not\in N_{w'}$ for all $q'' > q'$. Therefore, when $(v^*,w')$ is reported, 
	outcome $(q',p(q'))$ is selected, which establishes property (i) for $v^*$ and 
	completes the proof for the case $p(q)q'/q > p(q')$.
	
\end{proof}

Let $\mathbb{M}$ be IR, SBB, and DSIC with respect to the class of 
monotonically increasing submodular valuation functions. From the above it 
follows that for a mechanism that is IR, SBB, and DSIC with respect to 
$\mathcal{S}_k$ or $\mathcal{I}_k$, there exists a price $p \in \mathbb{R}_{\geq 
0}$ such that for all pairs $(v,w)$ of monotonically increasing submodular 
valuation functions, the payment charged to the buyer is $q_Bp$ (and the 
payment charged to the seller is $-q_Bm$ by SBB). We will refer to $p$ as the 
\emph{unit price}.

The above corollary establishes the needed properties on the payments of the 
mechanism.
The remaining lemmas use Lemma \ref{lem:linearprice} by implicitly assuming the 
existence of the unit price $p$ in their statement, and they characterise the 
quantities $S$ tradable by the mechanism and the quantities that appear in the 
menus $M_v$ and $N_w$. 
The next lemma states that the utility maximizing outcome in $S$ for a buyer 
with any valuation function $v$ is always in $M_v$.
\begin{lemma}
	If $\mathbb{M}$ is SBB, IR, and DSIC with respect to $\mathcal{S}_k$, and 
	suppose that unit price $p$ is positive. Then, for all $v \in \mathcal{S}_k$ it 
	holds that $(q,p(q)) \in M_v$ for the lowest $q$ in the set $\arg_q \max\{v(q) - 
	p(q) : q \in S \}$.
\end{lemma}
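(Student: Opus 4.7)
The plan is to exhibit, for any given buyer valuation $v \in \mathcal{S}_k$ with smallest maximizer $q^* \in \arg_q\max\{v(q) - pq : q \in S\}$, an explicit seller valuation $w^* \in \mathcal{S}_k$ that provably forces $\mathbb{M}(v, w^*) = (q^*, pq^*)$; this would immediately yield $(q^*, pq^*) \in M_v$. I focus on the main case $q^* \geq 1$. Since $q^* \in S$, one may pick any pair $(v_0, w_0)$ with $\mathbb{M}(v_0, w_0) = (q^*, pq^*)$, whose existence follows from the definition of $S$ together with Lemma~\ref{lem:linearprice}. I would define $w^*$ by its marginals: a very large value $L$ on each of the first $k - q^*$ units and a small value $\epsilon$ with $0 < \epsilon < p$ on each of the last $q^*$ units; this is submodular because $L \geq \epsilon$. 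For $L$ chosen large enough, seller IR alone forces every outcome $\mathbb{M}(\cdot, w^*)$ to trade at most $q^*$ units, since trading $q > q^*$ would require $(q - q^*)L \leq pq - q^*\epsilon$, which fails.

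The plan then applies DSIC twice, starting from the base pair $(v_0, w_0)$. First, seller DSIC applied to the deviation $w^* \to w_0$ under buyer report $v_0$: writing $\mathbb{M}(v_0, w^*) = (q', pq')$, one has $w^*(k - q') + pq' \geq w^*(k - q^*) + pq^*$. Because $q' \leq q^*$ by the IR bound, both $k - q'$ and $k - q^*$ lie in the flat region of $w^*$, so $w^*(k - q') - w^*(k - q^*) = (q^* - q')\epsilon$, and the inequality simplifies to $(p - \epsilon)q' \geq (p - \epsilon)q^*$. Since $p > \epsilon$, this forces $q' \geq q^*$, and hence $\mathbb{M}(v_0, w^*) = (q^*, pq^*)$.

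Second, buyer DSIC applied to the deviation $v \to v_0$ under seller report $w^*$: writing $\mathbb{M}(v, w^*) = (q'', pq'')$, DSIC gives $v(q'') - pq'' \geq v(q^*) - pq^*$. Since $q'' \in S$ and $q^*$ is by hypothesis a maximizer of $v(q) - pq$ over $S$, the reverse inequality also holds, so $q''$ is itself a maximizer. Because $q^*$ is the \emph{smallest} maximizer and $q'' \leq q^*$ by the IR bound, one concludes $q'' = q^*$, so $(q^*, pq^*) \in M_v$ as required. The only technical subtlety is the degenerate case $q^* = 0$, handled separately: choosing $w^*$ with marginal $L$ throughout makes seller IR preclude any positive trade on input $(\cdot, w^*)$, giving $\mathbb{M}(v, w^*) = (0, 0)$ directly.
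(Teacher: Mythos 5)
Your proof is correct and takes essentially the same approach as the paper: both construct a seller valuation $w^*$ that is very steep up to quantity $k-q^*$ and very flat (with per-unit marginal below $p$) afterward, use IR to cap the traded quantity on reports $(\cdot, w^*)$ at $q^*$, and then use DSIC to force the mechanism to output $(q^*, pq^*)$ on $(v, w^*)$. You make the two DSIC inequalities (seller deviation $w^* \to w_0$ under $v_0$, then buyer deviation $v \to v_0$ under $w^*$) fully explicit where the paper argues at the slightly higher level of menus $N_{w^*}$, but the underlying argument is identical.
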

\begin{proof}
	Let $q$ be the lowest quantity in $\arg_{q} \max\{v(q) - p(q) : q \in S \}$. 
	Let $(v',w')$ be any report that results in outcome $(q, p(q))$, so that $(q,p(q)) 
	\in N_{v'}$. Let $w^*$ be a valuation function that increases extremely steeply 
	up to the quantity $k-q$ and increases extremely slowly afterwards. Observe 
	that by our assumption that $p > 0$, a seller with valuation $w^*$ strongly 
	prefers outcome $(q,p(q))$ over all other outcomes in $S$, and trading any 
	quantity larger than $q$ would violate IR. by DSIC, outcome $(q,p(q))$ is thus 
	selected when $(v',w^*)$ is reported, hence $(q,p(q)) \in N_{w^*}$ and 
	$(q',p(q')) \not\in N_{w^*}$ for all $q' > q$. So when $(v,w^*)$ is reported, an 
	outcome is selected from $N_{w^*}$ that maximises the utility of the buyer 
	with valuation $v$, and this outcome is $(q,p(q))$.
	
\end{proof}

The following lemma strengthens the previous.
\begin{lemma}\label{lem:buyermenu}
	Suppose $\mathbb{M}$ is SBB, IR, and DSIC with respect to $\mathcal{S}_k$ 
	and suppose that unit price $p$ is positive. Let $v \in \mathcal{S}_k$ and let 
	$q \leq \min \arg_{q'} \max \{v(q') - p(q') : q' \in S\}$ be a quantity not 
	exceeding the least utility-maximizing quantity for a buyer with valuation $v$. 
	It holds that $\arg_{q'} \max\{v(q') - p(q') : q' \in S, q' \leq q \}$ is the singleton 
	set containing the quantity $q' = \max S \cap [q]$, and that $q \in M_v$.
\end{lemma}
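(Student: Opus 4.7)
The plan is to prove the two conclusions in turn. For the singleton claim, let $q^{*} := \min \arg_{q'} \max\{v(q') - pq' : q' \in S \cup \{0\}\}$. Minimality of $q^{*}$ forces $\tilde{v}(q^{*}) > p$, since otherwise $q^{*}-1$ would also be a maximizer. Submodularity of $v$ then gives $\tilde{v}(j) \ge \tilde{v}(q^{*}) > p$ for every $j \le q^{*}$, so $q' \mapsto v(q') - pq'$ is strictly increasing on $\{0, 1, \ldots, q^{*}\}$. Since $q \le q^{*}$, this function is uniquely maximized over $S \cap [q] \cup \{0\}$ at $q_{\max} := \max(S \cap [q] \cup \{0\})$. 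If $S \cap [q] = \varnothing$ then $q_{\max} = 0$ and the second claim is trivial because $(0,0) \in M_v$ by IR; so assume $q_{\max} \ge 1$.

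For the menu claim, construct a seller valuation $w^{*}$ that grows on $[0, k-q]$ so steeply that $w^{*}(k) - w^{*}(k - q') > pq'$ for every $q' > q$, and grows at some small rate $\epsilon \in (0, p)$ on $[k-q, k]$. Then seller IR forbids any outcome in $N_{w^{*}}$ with quantity exceeding $q$. Next, let $v'$ be the submodular buyer valuation with marginals $\tilde{v}'(j) = p + \delta$ for $j \le q_{\max}$ and $\tilde{v}'(j) = p - \delta$ for $j > q_{\max}$, where $\delta \in (0, p)$. A direct calculation gives $v'(q') - pq' = \delta q'$ for $q' \le q_{\max}$ and $v'(q') - pq' = \delta(2 q_{\max} - q')$ for $q' > q_{\max}$, so $q_{\max}$ is the unique maximizer over $S \cup \{0\}$. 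By the previous lemma, $(q_{\max}, p q_{\max}) \in M_{v'}$.

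Now consider the report $(v', w^{*})$: its outcome lies in both $M_{v'}$ and $N_{w^{*}}$. Every outcome in $M_{v'}$ with quantity $q' \le q$ has $q' \in S \cap [q] \cup \{0\}$, hence $q' \le q_{\max}$; and for any such $q'$ the seller's utility equals $w^{*}(k - q') + pq' = w^{*}(k) - \epsilon q' + pq'$, strictly increasing in $q'$ since $\epsilon < p$. For $q' > q$ the steep portion of $w^{*}$ pushes the seller's utility below $w^{*}(k)$, which is strictly less than its value at $q_{\max}$. Hence $(q_{\max}, p q_{\max})$ is the unique maximizer of seller $w^{*}$'s utility over $M_{v'}$, and DSIC forces $\mathbb{M}(v', w^{*}) = (q_{\max}, p q_{\max})$, so $(q_{\max}, p q_{\max}) \in N_{w^{*}}$. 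To conclude, apply DSIC on the buyer's side at $(v, w^{*})$: the output is the buyer-$v$-utility maximizer over $N_{w^{*}} \subseteq \{(q', pq') : q' \in S \cap [q] \cup \{0\}\}$, which by the first part is uniquely $(q_{\max}, p q_{\max})$. Thus $\mathbb{M}(v, w^{*}) = (q_{\max}, p q_{\max}) \in M_v$.

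The main obstacle will be the careful coordination of the parameters $\delta$, $\epsilon$, and the steepness of $w^{*}$ on $[0, k-q]$: the steep pieces must outrun the per-unit payments $pq'$ so IR is violated above the intended quantity, while the slow piece must be slow enough that the previous lemma yields the targeted element of $M_{v'}$ but fast enough (specifically $\epsilon < p$) that the seller's preference on $M_{v'}$ is strict at $q_{\max}$, pinning down the mechanism's output uniquely.
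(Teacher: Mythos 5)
Your treatment of the menu claim is correct and follows the paper's template (a steep seller valuation caps the tradeable quantity via IR, and DSIC on each side then pins the outcome), with the mild variation that you construct an explicit $v'$ and invoke the preceding lemma rather than starting from an arbitrary report producing $(q_{\max}, pq_{\max})$ and replacing only the seller side. That part is fine.

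However, your derivation of the singleton claim has a gap. The step ``Minimality of $q^{*}$ forces $\tilde{v}(q^{*}) > p$, since otherwise $q^{*}-1$ would also be a maximizer'' silently assumes $q^{*}-1 \in S \cup \{0\}$. When $S$ has gaps this can fail: take $S = \{0,2,5\}$ and choose a submodular $v$ (with $p > 5$) so that $u(q') := v(q') - pq'$ takes values $0, 10, 5, 0, -5, -10$ at $q' = 0,\dots,5$; then the least maximizer of $u$ over $S \cup \{0\}$ is $q^{*}=2$, but $\tilde{v}(2) = p - 5 < p$. Your follow-up claim that $u$ is strictly increasing on all of $\{0,1,\dots,q^{*}\}$ is also false in that example ($u$ goes $0, 10, 5$). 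The conclusion you need — that $u$ restricted to $(S\cup\{0\}) \cap \{0,\dots,q\}$ is uniquely maximized at $\max(S\cap[q]\cup\{0\})$ — is nevertheless true, but it must be argued from concavity of $u$ on the \emph{restricted} domain, not from a lower bound on $\tilde{v}(q^{*})$. One clean route: if $q_1 < q_2 \le q^{*}$ both lie in $S\cup\{0\}$ with $u(q_1) \ge u(q_2)$, then concavity on $\{q_1,q_2,q^{*}\}$ forces $u(q_2) \ge u(q^{*})$ and hence $u(q_1) \ge u(q^{*})$, contradicting that $q^{*}$ is the strict least maximizer over $S\cup\{0\}$; this gives strict increase of $u$ along $S\cup\{0\}$ up to $q^{*}$, which is all the remainder of your proof uses. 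As written, though, the intermediate claims are incorrect, so the argument needs this repair.
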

\begin{proof}
	Note that the existence of the unit price $p$ implies that the utility function of 
	the buyer is a submodular function of the traded quantity. Therefore, the utility 
	function for a buyer with valuation $v$ is increasing up to the least 
	utility-maximizing outcome in $S$, after which it stays constant up to the 
	highest utility-maximizing outcome in $S$, after which it starts decreasing. 
	Let $q$ be any quantity less than or equal to the least utility-maximizing 
	quantity, i.e., less than $\min \arg \max \{v(q') - p(q') : q' \in S\}$. Then, the 
	utility-maximizing quantity $q'$ in $S \cap [q]$ for a buyer with valuation $v$ 
	is $\max S \cap [q]$. It remains to prove that $(q',p(q'))$ is in $M_v$. Let 
	$(v',w')$ be any report resulting in outcome $(q',p(q'))$, so that $(q',p(q')) \in 
	N_{v'}$. Let $w''$ be any function increasing extremely steeply up to quantity 
	$k-q'$, after which it increases extremely slowly. Then $(q',p(q'))$ is the result 
	of report $(v',w'')$, and note that it is not IR to trade a quantity exceeding 
	$q'$ when $w''$ is reported, so $(q'',p(q'')) \not\in N_{w''}$ for $q'' > q'$, and 
	$(q',p(q')) \in N_{w''}$. Therefore, when $(v,w'')$ is reported, a quantity of $q'$ 
	is traded, and no higher quantity. (Note that we use positivity of $p$ here.) 
	Thus, $(q',p(q'))$ is in $M_{v}$, which proves the claim.
\end{proof}

The above lemma shows that for a mechanism $\mathbb{M}$ that is SBB, IR, 
and DSIC with respect to $\mathcal{S}_k$, if $p > 0$, then for any $v \in 
\mathcal{S}_k$, the menu $M_v$ that the buyer presents to the seller includes 
the outcomes $(q,p(q))$ such that $q$ is in the subset of $S$ obtained by 
truncating $S$ at the buyer's least-quantity utility-maximizing outcome. 

We can prove the following symmetric lemma for the seller.
\begin{lemma}\label{lem:sellermenu}
	Suppose $\mathbb{M}$ is SBB, IR, and DSIC with respect to $\mathcal{S}_k$ 
	and suppose that the unit price $p$ is positive. Let $w \in \mathcal{S}_k$ and 
	let $q \leq \min \arg_{q'} \max \{w(k-q') + p(q') : q' \in S\}$ be a quantity not 
	exceeding the least utility-maximizing quantity for a seller with valuation $w$. 
	It holds that $\arg_{q'} \max\{w(k-q') + p(q') : q' \in S, q' \leq q \}$ is the 
	singleton set containing quantity $q' = \max S \cap [q]$, and that $q \in N_w$.
\end{lemma}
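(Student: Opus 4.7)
The plan is to mirror the argument of Lemma~\ref{lem:buyermenu} with the roles of the buyer and seller swapped. The existence of a fixed unit price $p$ (established by Lemma~\ref{lem:linearprice}) implies that the seller's utility $w(k-q')+p(q')$ is a submodular function of the traded quantity $q'$, since $w$ is submodular and the monetary term $p(q')=q'p$ is linear. Hence the utility rises up to the least utility-maximizing quantity in $S$, stays constant on the plateau of maximizers, and strictly decreases afterwards. Restricting to $q' \in S \cap [q]$ with $q$ at most the least maximizer therefore leaves us on the strictly increasing part, so the unique best choice is $q' = \max S \cap [q]$. This handles the singleton claim.

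For the claim that $(q',p(q')) \in N_w$, I would construct a buyer valuation $v''$ whose preferences force the mechanism to either trade exactly $q'$ units or trade nothing at all, regardless of what the seller reports. Specifically, pick any report $(v',w^*)$ that yields outcome $(q',p(q'))$, so $(q',p(q')) \in M_{w^*}$. Now let $v''$ be a valuation that increases extremely steeply on $[0,q']$ and extremely slowly on $[q',k]$. Any $q'' > q'$ would give the buyer with valuation $v''$ a negative utility at the unit price $p$, and so by IR such outcomes cannot appear in $M_{v''}$. Consequently, upon report $(v'',w^*)$ the mechanism must select $(q',p(q'))$, so $(q',p(q')) \in M_{v''}$ and $(q'',p(q'')) \notin M_{v''}$ for $q'' > q'$. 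Then when $(v'',w)$ is reported, DSIC applied to the seller forces the mechanism to choose the option from $M_{v''}$ that is utility-maximizing for a seller with valuation $w$; by the previous paragraph, among the available quantities (all at most $q'$) this maximizer is exactly $q'$. Hence $(q',p(q')) \in N_w$, as required.

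The main obstacle, as in the proof of Lemma~\ref{lem:buyermenu}, is to ensure that the auxiliary buyer valuation $v''$ is chosen so that (i) $q'$ remains an available option, i.e.\ $(q',p(q')) \in M_{v''}$, and (ii) no larger quantity in $S$ becomes available, i.e.\ all higher elements of $S$ are IR-excluded. The steep/flat construction of $v''$ handles both requirements simultaneously, and positivity of $p$ is needed to conclude that the seller strictly prefers $(q',p(q'))$ over $(0,0)$ so that the seller's reported valuation $w$ actually selects $q'$ rather than the no-trade outcome.
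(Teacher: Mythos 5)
The singleton portion of your argument is correct and matches the paper. The gap is in the construction of the auxiliary buyer valuation $v''$. You let $v''$ increase ``extremely steeply'' on $[0,q']$ and very slowly thereafter, and then claim that any quantity $q'' > q'$ would give the buyer negative utility, so IR excludes all such outcomes from $M_{v''}$. That claim is false: if $v''(q')$ is made huge, then for $q'' > q'$ we still have $v''(q'') \geq v''(q')$, and the buyer's utility $v''(q'') - p q'' \geq v''(q') - pk$ is huge and positive, so IR imposes no restriction at all. You have blindly mirrored the construction from Lemma~\ref{lem:buyermenu}, but there is a genuine asymmetry between the two agents: the seller starts with $w(k)$ and loses units (so a steep increase in $w$ near $k$ makes surrendering those units prohibitively costly), whereas the buyer starts at zero and gains units (so a steep increase in $v$ near zero makes every quantity up to $k$ attractive, not only those up to $q'$).

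The correct construction, which is what the paper uses, is to let $v''$ grow at rate $p + \epsilon$ (just barely above the unit price) up to $q'$, and extremely slowly afterward, for a suitably small $\epsilon > 0$. Then $v''(q') - pq' = \epsilon q' > 0$, but for any $q'' > q'$ the marginal value of the $(q'+1)$-th through $q''$-th units is far below $p$, so $v''(q'') - pq''$ becomes negative once $\epsilon$ is chosen small enough. This is what actually rules out the higher quantities from $M_{v''}$ by IR. With that fix, the rest of your outline (conclude $(q',p(q')) \in M_{v''}$, rule out $q'' > q'$, apply DSIC at $(v'',w)$ to deduce $(q',p(q')) \in N_w$) goes through as in the paper.
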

\begin{proof}
	Note that the existence of the unit price $p$ implies that the increase in utility 
	of the seller is a submodular function of the traded quantity $q$. Therefore, 
	the function for a buyer with valuation $v$ is increasing up to the least 
	utility-maximizing outcome in $S$, after which it stays constant up to the 
	highest utility-maximizing outcome in $S$, after which it starts decreasing. 
	Let $q$ be any quantity less than or equal to the least utility-maximizing 
	quantity, i.e., less than $\min \arg \max \{w(k-q') + p(q') : q' \in S\}$. Then, the 
	utility-maximizing quantity $q'$ in $S \cap [q]$ for a seller with valuation $w$ 
	is $\max S \cap [q]$. It remains to prove that $(q',p(q'))$ is in $N_{w}$. Let 
	$(v',w')$ be any report resulting in outcome $(q',p(q'))$, so that $(q',p(q')) \in 
	N_{w'}$. Let $v''$ be any function increasing at a rate $p+\epsilon$ up to 
	quantity $q'$, for a sufficiently small $\epsilon > 0$, after which it increases 
	extremely slowly. Then $(q',p(q'))$ is the result of report $(v'',w')$, and note 
	that it is not IR to trade a quantity exceeding $q'$ when $v''$ is reported, so 
	$(q'',p(q'')) \not\in M_{v''}$ for $q'' > q'$, and $(q',p(q')) \in N_{v''}$. 
	Therefore, when $(v'',w)$ is reported, a quantity of $q'$ is traded, and no 
	higher quantity. Thus, $(q',p(q'))$ is in $M_{w}$, which proves the claim.
\end{proof}

The above lemma shows that for any unit price mechanism $\mathbb{M}$ that is 
SBB, IR, and DSIC with respect to $\mathcal{S}_k$, for any $w \in 
\mathcal{S}_k$, the menu $M_w$ that the seller presents to the buyer includes 
the outcomes $(q,p(q))$ such that $q$ is in the subset of $S$ obtained by 
truncating $S$ at the buyer's least-quantity utility-maximizing outcome. 

The last two lemmas combined imply that the menu of buyer consist of the 
outcomes $(q,p(q))$ in $S$ where the quantity does not exceed the least 
utility-maximizing outcome, plus an additional arbitrary subset of 
utility-maximizing outcomes; and the same holds for the seller. We will show that 
next.

\begin{lemma}\label{lem:menustructure}
	Suppose $\mathbb{M}$ is SBB, IR, and DSIC with respect to $\mathcal{S}_k$ 
	and suppose that the unit price $p$ is positive. Let $v,w \in \mathcal{S}_k$. 
	Let $q = \min \arg_{q''} \max \{v(q'') - p(q'') : q'' \in S\}$ be the least utility 
	maximizing quantity for the buyer with valuation $v$, then $M_v = \{(q'',p(q'')) 
	: q'' \in S \cap [q]\} \cup T$ where $T \subseteq \arg_{q''} \max \{v(q'') - p(q'') 
	: q'' \in S\}$. Similarly let $q' = \min \arg_{q''} \max \{w(k - q'') + p(q'') : q'' \in 
	S\}$ be the least utility maximizing quantity for the seller with valuation $w$, 
	then $N_w = \{(q'',p(q'')) : q'' \in S \cap [q']\} \cup T'$ where $T' \subseteq 
	\arg_{q''} \max \{w(k - q'') + p(q'') : q'' \in S\}$.
\end{lemma}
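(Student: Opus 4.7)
The plan is to prove the two inclusions characterising $M_v$ separately; the description of $N_w$ will then follow by a symmetric argument swapping the roles of buyer and seller. The $\supseteq$ direction, namely $\{(q'',p(q'')) : q'' \in S \cap [q]\} \subseteq M_v$, is immediate from Lemma \ref{lem:buyermenu} applied to $v$. The real task is to show that every ``extra'' outcome $(q'',p(q'')) \in M_v$ with $q'' > q$ must lie in $\argmax_{q''' \in S}\{v(q''') - p(q''')\}$.

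I will argue this by contradiction. Assume there exists such $q''$ with $v(q'') - p q'' < v(q) - p q$. The plan is to exhibit a buyer deviation from $v$ to a carefully tailored $v^*$ that, against a suitably constructed seller valuation $w^*$, strictly improves the buyer's true utility. I will take $w^*$ to be the submodular increasing valuation whose marginals equal $p + \epsilon$ on $\{1, \ldots, k - q''\}$ and drop to $p - \epsilon$ on $\{k - q'' + 1, \ldots, k\}$, for a small $\epsilon \in (0, p)$. A direct computation shows that the induced seller utility $u(q''') = w^*(k - q''') + p q'''$ attains its unique strict maximum at $q''$ over $\{0, \ldots, k\}$, which forces $\mathbb{M}(v, w^*) = (q'', p(q''))$. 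Moreover, concavity of $u$ (from submodularity of $w^*$) propagates this strict maximum backwards, so $u$ is strictly increasing on $[0, q]$ and $q$ is the unique seller-preferred quantity in $S \cap [q]$.

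Dually, I will take $v^*$ to have marginals $p + \epsilon$ on $\{1, \ldots, q\}$ and $p - \alpha$ on $\{q + 1, \ldots, k\}$ with $\alpha \in (\epsilon q, p)$. Then $v^*$ is increasing and submodular, its least utility-maximising quantity is uniquely $q$, and $v^*(q''') - p q''' < 0$ for every $q''' > q$. IR will then force $M_{v^*}$ to exclude every outcome with quantity exceeding $q$, while Lemma \ref{lem:buyermenu} applied to $v^*$ will supply the reverse containment, yielding $M_{v^*} = \{(q''', p(q''')) : q''' \in S \cap [q]\}$. Since $q$ is $w^*$'s unique favourite within this menu, $\mathbb{M}(v^*, w^*) = (q, p(q))$. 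A buyer with true valuation $v$ who deviates to $v^*$ thus receives utility $v(q) - p q > v(q'') - p q''$, contradicting DSIC. The analogous statement for $N_w$ will follow from the same scheme with buyer and seller roles interchanged, using Lemma \ref{lem:sellermenu} in place of Lemma \ref{lem:buyermenu}.

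The hardest part will be handling seller-side tie-breaking on the deviated menu: a seller valuation that merely satisfies $\mathbb{M}(v, w) = (q'', p(q''))$ could produce ties in $u$ across $[0, q]$, which the mechanism might then resolve in favour of some $q_0 < q$ and thereby thwart the DSIC violation. The key insight that unlocks the argument is to choose $w^*$ with a strict preference across the entire domain, so that concavity of $u$ automatically yields strict monotonicity on $[0, q]$, which removes any ambiguity in tie-breaking.
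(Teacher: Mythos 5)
Your proof is correct and follows essentially the same route as the paper's: you use Lemma~\ref{lem:buyermenu} (resp.~\ref{lem:sellermenu}) for the containment $\supseteq$, and you derive the $\subseteq$ direction by building a seller $w^*$ that strictly prefers the out-of-argmax quantity and then invoking DSIC on both sides of the profile $(v,w^*)$. The only cosmetic difference is that the paper contradicts directly at $(v,w^*)$ (by Lemma~\ref{lem:sellermenu}, $(q,p(q))\in N_{w^*}$, so the buyer's DSIC already forces an outcome incompatible with the seller's DSIC), whereas you spell out the profitable buyer deviation through an explicit $v^*$; both are the same two applications of DSIC, just packaged differently.
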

\begin{proof}
	Lemmas \ref{lem:buyermenu} and \ref{lem:sellermenu} show that $M_v 
	\supseteq \{(q'',p(q'')) : q'' \in S \cap [q]\}$ and $N_w \supseteq \{(q'',p(q'')) : 
	q'' \in S \cap [q']\}$. Let $\hat{q} \in S$ such that $\hat{q} > \max \arg_{q''} 
	\max \{v(q'') - p(q'') : q'' \in S\}$ and let $\check{q} \in S$ such that $\check{q} 
	> \max \arg_{q''} \max \{w(k - q'') + p(q'') : q'' \in S\}$. It suffices to show that 
	$(\hat{q},p(\hat{q})) \not\in M_v$ and that $(\check{q},p(\check{q})) \not\in N_w$.
	
	Suppose $(\hat{q},p(\hat{q})) \in M_v$. Let $w^*$ be a valuation function 
	increasing extremely steeply up to quantity $k-\hat{q}$, and increases 
	extremely slowly afterwards. By DSIC, the outcome $(\hat{q},p(\hat{q}))$ is 
	selected on report $(v,w^*)$, where we use that $p > 0$. However, by Lemma 
	\ref{lem:sellermenu} it holds that $(q,p(q)) \in N_{w^*}$, so that it also must 
	hold by the DSIC property that outcome $(q,p(q))$ is selected, which is a 
	contradiction.
	
	Suppose $(\check{q},p(\check{q})) \in N_w$. Let $v^*$ be a valuation 
	increasing at extremely high rate up to quantity $\check{q}$, that increases 
	extremely slowly afterwards. By DSIC, the outcome $(\check{q},p(\check{q}))$ 
	is selected on report $(v^*,w)$. However, by Lemma \ref{lem:buyermenu} it 
	holds that $(q',p(q')) \in N_{v^*}$, so that it also must hold by the DSIC 
	property that outcome $(q,p(q))$ is selected, which is a contradiction.
\end{proof}

We are now finally ready to prove the necessity-part of our characterisation of 
IR, DSIC, SBB multi-unit bilateral trade mechanisms.
\begin{proof}[Proof of Theorem \ref{thm:char}]
	By Lemma \ref{lem:uniqueprice}, for all $q \in S$ there is a price $p(q)$ such 
	that a payment of $p(q)$ is charged whenever $q$ units are traded. By Lemma 
	\ref{lem:linearprice}, there is a unit price $p$ such that $p(q) = p \cdot q$ for 
	all $q \in S$. This establishes already that the payment function of any IR, 
	DSIC, and SBB mechanism is in accordance with Definition \ref{def:char}, 
	hence it remains to establish that the traded quantity is also prescribed by 
	Definition \ref{def:char}.
	
	First we consider the special case $p = 0$. By increasingness of the valuation 
	function of the seller, it follows that the mechanism can only trade a quantity 
	of $0$ units in order to satisfy IR. Subsequently it follows by IR and SBB that 
	the mechanism is required to charge a payment of $0$. A mechanism that 
	always trades $0$ units at price $0$ is by definition equal to a mechanism 
	$\mathbb{M}_{0,\varnothing,\tau}$, where $\tau$ is arbitrary and irrelevant as 
	there is only a single outcome that the mechanism outputs. 
	
	Next, assume that $p > 0$. We prove the claim separately for $\mathcal{S}_k$ 
	and $\mathcal{I}_k$, and we start with $\mathcal{S}_k$.
	By Lemma \ref{lem:menustructure}, for every pair of valuations $(v,w)$ it holds 
	that $M_v = \{ q \in S : q \leq \min\arg_q'\max\{v(q') + p(q')\}\} \cup T$ where 
	$T$ is an arbitrary set of utility-maximizing quantities in $S$ for a buyer with 
	valuation $v$, and $N_w = \{q \in S : q \leq \min\arg_{q'}\max\{w(k-q') + 
	w(q)\}\} \cup T'$ where $T'$ is an arbitrary set of utility maximizing quantities 
	in $S$ for a seller with valuation $w$. Let $\tau_S(w)$ be the seller's utility 
	maximizing quantities in $N_w$ and let $\tau_B(v)$ be the buyer's 
	utility-maximizing quantities in $M_v$. If $\tau_S(w)$ and $\tau_B(v)$ intersect, 
	then by DSIC, mechanism must output any quantity in $\tau_S(w) \cap 
	\tau_B(v)$: call this quantity $\tau_{\cap}(v,w)$. Otherwise, if $\tau_S(w) \cap 
	\tau_B(v) = \varnothing$, the mechanism must output $\min\{\max \tau_S(w), 
	\max \tau_B(v)\}$, in order to satisfy the DSIC property: Assume that and 
	$\max \tau_B(v) > \max \tau_S(w)$ (the other case is symmetric) and suppose 
	that the mechanism trades any quantity $q \not= \max \tau_S(w)$. Since the 
	traded quantity $q$ must lie in the intersection of $M_v$ and $N_v$ and since 
	$\tau_B(v)$ and $\tau_S(w)$ are sets of highest quantities in $M_v$ and 
	$N_w$ respectively, we have $q < \max \tau_S(w)$. Hence, among the 
	quantities in $N_w$ a quantity less than $\max \tau_S(w)$ is traded, but the 
	buyer prefers quantity $\max \tau_S(w)$ because $\max \tau_S(w)$ is closer to 
	the buyer's set $\max \tau_B(v)$ of utility-maximizing quantities in $M_v$, 
	which would give the buyer a higher utility due to submodularity. The buyer 
	would thus misreport such that $\max \tau_S(w)$ is output instead. Note the 
	tie-breaking functions $\tau = (\tau_B,\tau_S,\tau_{\cap})$ we just established, 
	as well as the derived traded quantity $q$ given to the buyer, agree precisely 
	with those of Definition \ref{def:char}. We complete the equivalence by noting 
	that $0 \in S$ as we can define a seller's utility function that grows extremely 
	steeply up to quantity $k$, so that $(0,0)$ is the only IR outcome. This implies 
	that $\mathbb{M} = \mathbb{M}_{p,S\setminus\{0\},\tau}$.
	
	It remains to prove the claim for $\mathcal{I}$. Suppose for contradiction that 
	there are at least two positive quantities $q,q'$ in $S$, where $0 < q < q'$. 
	We apply the same technique as in Lemma \ref{lem:linearprice}. Let $(v,w)$ be 
	a valuation profile such that the mechanism selects $(q,p(q))$ when $(v,w)$ is 
	reported, and let $(v',w')$ be a valuation profile such that the mechanism 
	selects $(q',p(q'))$ when $(v',w')$ is reported. 
	
	Let $v^*$ be a valuation function that increases extremely slowly up to 
	quantity $q-1$, then jumps to a value of $pq + 2\epsilon$ at quantity $q$ and 
	proceeds again to grow extremely slowly up to quantity $q'-1$, and finally 
	jumps to a value of $pq' + \epsilon$ at quantity $q'$ after which it grows 
	extremely slowly onward. The only IR quantities that the mechanism can trade 
	when a buyer reports $v^*$ are $0$, $q$, and $q'$. As $(q,p(q)) \in N_w$, the 
	mechanism must select the outcome $(q,p(q))$ when $(v^*,w)$ is reported, 
	because of DSIC. So, $(q,p(q)) \in M_{v^*}$. Next, we construct a function 
	$w''$ for which it holds that $(q,p(q))\not\in N_{w''}$ and $(q',p(q')) \in 
	N_{w''}$: Valuation $w''$ is defined such that it increases extremely steeply 
	up to the quantity $k-q'$. Subsequently it increases by an amount of $p + 
	(q'+1)\epsilon$ to quantity $k-q'+1$, and it increases at a rate of $p - 
	\epsilon$ afterward. Note that the only IR quantities that can be traded under 
	$w''$ are $0$ and $q'$. We thus have that $(q,p(q))\not\in N_{w''}$ and 
	$(q',p(q')) \in N_{w''}$ because $(q',p(q'))$ in $M_{v'}$ and by DSIC the 
	mechanism must select $(q',p(q'))$ when $(v',w'')$ is reported. Therefore, 
	when $(v^*,w')$ is reported, $(q',p(q'))$ is selected so we see that $(q',p(q')) 
	\in M_{v^*}$.
	
	Let $w^*$ be a valuation function defined as follows. Let $\epsilon > 0$ be 
	sufficiently small. We let $w^*(k) = kp$, and for all $q'' > 0$ not equal to $q$ 
	or $q'$, We let $w^*(k-q'') = p\cdot (k-q'') - \epsilon$, so that the seller's 
	increase in utility for trading $q''$ units is $pq'' - (w^*(k) - w^*(k-q'')) = pq'' - 
	(pk - p(k-q'') + \epsilon) = -\epsilon$, so when $w^*$ is the valuation of the 
	seller, the mechanism cannot trade $q''$ items as that would violate IR. 
	Moreover, we define $w^*(k-q) = p\cdot(k-q)+\epsilon$ and $w^*(k-q') = 
	p\cdot(k-q') + 2\epsilon$, so that trading $q$ or $q'$ units leads to an 
	increase in utility for a seller with valuation $w^*$ and so that trading $q''$ 
	units is the preferred quantity to trade for a seller with valuation $w^*$. We 
	now see that $(q',p(q')) \in N_{w^*}$, because $(q',p(q'))$ is in $M_{v'}$ so 
	that by DSIC the mechanism outputs $(q',p(q'))$ on report $(v',w^*)$. Next, 
	we construct a function $v''$ for which it holds that $(q',p(q'))\not\in M_{v''}$ 
	and $(q,p(q)) \in M_{v''}$. This function is defined as follows: $v(q'') = 
	pq''-\epsilon$ for all $q''$ except $q$, where $v(q) = pq + \epsilon$. When a 
	buyer reports $v''$, by IR the mechanism can either trade $0$ or $q$ units 
	and no other quantity. On report $(v'',w)$ the mechanism must output 
	$(q,p(q))$ due to DSIC and because $(q,p(q)) \in N_w$ by assumption. Thus 
	$(q,p(q)) \in N_{v''}$ and $(q',p(q')) \not\in M_{v''}$, hence when $(w^*,v')$ is 
	reported the mechanism outputs $(q,p(q))$ because of DSIC. This establishes 
	$(p,p(q)) \in N_{w^*}$.
	
	We thus have constructed two functions $v^*$ and $w^*$ for which it holds 
	that both $(q,p(q))$ and $(q',p(q'))$ are in both $M_{v^*}$ and $N_{w^*}$. 
	Moreover, a buyer with valuation $v^*$ strictly prefers $(q,p(q))$ over 
	$(q',p(q'))$, so by DSIC the mechanism must output the outcome $(q,p(q))$ 
	when $(v^*,w^*)$ is reported. However, a seller with valuation $w^*$ strictly 
	prefers $(q',p(q'))$ over $(q,p(q))$, so by DSIC the mechanism must output the 
	outcome $(q',p(q'))$ when $(v^*,w^*)$ is reported, which is a contradiction. 
	So, we must refute the assumption that there are at least 2 quantities that the 
	mechanism can trade.
	
	Hence, either $(0,0)$ is always output, in which case the claim is trivial (the 
	mechanism is equal to $\mathbb{M}_{0,\varnothing,\tau}$, where $\tau$ is not 
	relevant), or there is a unique positive quantity $q$ such that the mechanism 
	selects either $(q,p(q))$ or $(0,0)$ and outputs $(q,p(q))$ on at least one 
	valuation profile $(v,w)$. It now suffices to prove, by definition of the 
	mechanism (Definition \ref{def:char}), that outcome $(q,p(q))$ is selected if 
	both players experience an increase in utility from this outcome. Let $(v',w')$ 
	be an arbitrary valuation profile for which the latter holds. As $(q,p(q)) \in 
	M_v$, we infer that $(q,p(q))$ is output on report $(v,w')$ so that $(p,p(q)) \in 
	M_w'$. Thus, by DSIC, the mechanism must select $(q,p(q))$ on report 
	$(v',w')$ as otherwise a buyer with valuation $v'$ would report $v$ instead. 
	This proves that the mechanism equals $\mathbb{M}_{ p,\{(q,p(q))\},\tau }$.
\end{proof}

\end{document}